\relax
\documentclass[letterpaper]{article} 
\usepackage{aaai20}  
\usepackage{lmodern}
\usepackage[T1]{fontenc}
\usepackage{cancel}
\usepackage{mathrsfs}  
\usepackage[hyphens]{url}  
\usepackage{graphicx} 
\urlstyle{rm} 
\usepackage{graphicx}  
\frenchspacing  
\setlength{\pdfpagewidth}{8.5in}  
\setlength{\pdfpageheight}{11in}  
 \pdfinfo{
/Title (Probabilistic Reasoning across the Causal Hierarchy)
/Author (Duligur Ibeling, Thomas Icard)
}

\setcounter{secnumdepth}{0} 

\setlength\titlebox{2.5in} 

\newcommand{\aT}{\mathsf{T}}
\newcommand{\Perp}{\perp \! \! \! \perp}
\usepackage{amsthm}
\usepackage{amssymb}
\usepackage{amsmath}
\usepackage{tikz}
\usetikzlibrary{arrows}
\usepackage{array,enumitem}
\theoremstyle{definition}
\newtheorem{theorem}{Thm.}
\newtheorem{lemma}[theorem]{Lem.}
\newtheorem{corollary}[theorem]{Cor.}
\newtheorem{proposition}[theorem]{Prop.}
\newtheorem{definition}{Def.}

\newtheorem{fact}{Fact}

\usepackage{stmaryrd}
\newcommand{\semantics}[1]{\big\llbracket{#1}\big\rrbracket}
\usepackage{latexsym}

\newcommand{\Val}{\textnormal{Val}}
\newcommand{\Ll}{\mathcal{L}}
\newcommand{\base}{\textnormal{base}}
\newcommand{\Lbasei}{\Ll^{\base}_i}
\newcommand{\Lbase}{\Ll^{\base}}
\newcommand{\Lbasethree}{\Ll^{\base}_3}

\newcommand{\Lcond}{\mathcal{L}_{\textnormal{cond}}}
\newcommand{\Lprop}{\mathcal{L}_{\textnormal{prop}}}
\newcommand{\Lfull}{\mathcal{L}_{\textnormal{full}}}

\newcommand{\Lint}{\mathcal{L}_{\textnormal{int}}}
\newcommand{\AX}{\textsf{AX}}

\newcommand{\tT}{\mathbf{t}}

\newcommand{\Poly}{\textsf{Poly}}

\newcommand{\AddComm}{\textsf{AddComm}}
\newcommand{\AddAssoc}{\textsf{AddAssoc}}

\newcommand{\One}{\textsf{One}}

\newcommand{\0}{\underline{0}}
\newcommand{\1}{\underline{1}}

\newcommand{\EndoInst}{\mathbf{v}}
\newcommand{\Unit}{\mathbf{u}}
\newcommand{\Units}{\mathfrak{u}}

\newcommand{\Vphi}{\mathbf{V}_{\varphi}}
\newcommand{\Lfullphi}{\mathcal{L}_{{\mathrm{full}(\varphi)}}}
\newcommand{\Lintphi}{\mathcal{L}_{\mathrm{int}(\varphi)}}
\newcommand{\Lcondphi}{\mathcal{L}_{\mathrm{cond}(\varphi)}}

\newcommand{\influences}{{\leadsto}}
\usepackage{centernot}
\newcommand{\ninfluences}{{\centernot{\leadsto}}}

\newcommand{\influencesindirect}{{\rightsquigarrow}}

\makeatletter
\providecommand{\leftsquigarrow}{%
  \mathrel{\mathpalette\reflect@squig\relax}%
}
\newcommand{\reflect@squig}[2]{%
  \reflectbox{$\m@th#1\rightsquigarrow$}%
}
\makeatother

\usepackage{color}

\newcommand{\ssep}{{\,:\,}}
\newcommand{\notmodels}{\centernot{\models}}

\DeclareMathOperator{\supp}{supp}
\DeclareMathOperator{\cons}{sat}

\renewcommand{\epsilon}{\varepsilon}

\usepackage{comment}


\title{Probabilistic Reasoning across the Causal Hierarchy\thanks{In \emph{Proceedings of the Thirty-Fourth AAAI Conference on Artificial Intelligence} (AAAI-20).}}
\author{Duligur Ibeling,\textsuperscript{\rm 1} Thomas Icard\textsuperscript{\rm 2}\\
\textsuperscript{\rm 1}Department of Computer Science, Stanford University\\
\textsuperscript{\rm 2}Department of Philosophy, Stanford University\\
duligur@stanford.edu, icard@stanford.edu
}

\begin{document}

\maketitle

\begin{abstract}
We propose a formalization of the three-tier causal hierarchy of association, intervention, and counterfactuals as a series of probabilistic logical languages. Our languages are of strictly increasing expressivity, the first capable of expressing quantitative probabilistic reasoning---including conditional independence and Bayesian inference---the second encoding $do$-calculus reasoning for causal effects, and the third capturing a fully expressive $do$-calculus for arbitrary counterfactual queries. We give a corresponding series of finitary axiomatizations complete over both structural causal models and probabilistic programs, and show that satisfiability and validity for each language are decidable in polynomial space.
\end{abstract}

\section{Introduction and Summary}
Intelligence commonly involves prediction, anticipating future events on the basis of past observations (e.g., ``Will the water pipes freeze again this winter?''). Intelligent planning and decision-making additionally require predicting what \emph{would} happen under a hypothetical action (``Will the pipes freeze if we keep the heat on?''). An even more sophisticated ability---critical for tasks like explanation---is to reason counterfactually about what \emph{would have} happened given knowledge about what in fact happened (``Would the pipes have frozen if we had left the heat on, given that the heat was off and the pipes in fact froze?''). These three modes of reasoning constitute a \emph{causal hierarchy} \cite{Shpitser,Pearl2009}, highlighting the significance of structural causal knowledge for flexible thought and action.

The aim of the present article is to gain conceptual as well as technical insight into this hierarchy by employing tools from logic. Loosely following earlier work \cite{Shpitser}, we propose a characterization of its levels in terms of logical syntax. Semantically, all three languages are interpreted over the same class of models, namely structural causal models (and later probabilistic programs). The languages differ in how much they can express about these models. $\mathcal{L}_1$, the language of association, expresses only ``pure'' probabilistic facts and relationships; $\mathcal{L}_2$, the language of probabilistic intervention, allows expressing probabilities of basic conditional ``if\dots then\dots'' statements; $\mathcal{L}_3$, the language of probabilistic counterfactuals, encodes probabilities for arbitrary boolean combinations of such conditional statements. Using standard ideas from logic and existing results, we can address questions about definability and expressiveness. For instance, it is easy to prove in our framework that each language is strictly more expressive than those below it in the hierarchy (Prop. \ref{l1l2}, \ref{l2l3} below). We can also interpret well-known insights from the graphical models and causal learning literatures as \emph{graph definability} results for appropriate probabilistic logical languages, analogously to correspondence theory in modal logic \cite{vanBenthem2001}.

In possession of a precise syntax and semantics for probabilistic causal reasoning, questions of axiomatization naturally arise. That is, we would like to identify a perspicuous set of basic principles that underly all such reasoning. One of our main technical contributions is a series of finitary (sound and complete) axiomatizations for each level of the causal hierarchy (Thm. \ref{thm:soundcomplete}), relying on methods from semialgebraic geometry. As a corollary to these completeness results, we also reveal a ``small-model property'' with the consequence that satisfiability and validity for $\mathcal{L}_1$, $\mathcal{L}_2$, and $\mathcal{L}_3$ can be decided in polynomial space (Thm. \ref{thm:complexity}).

Finally, in the last part of the paper we consider an alternative interpretation for our three logical languages. Probabilistic programs, with an appropriate notion of \emph{causal intervention}, provide a procedural semantics for probabilistic counterfactual claims and queries. We establish an equivalence between these models and a natural subclass of \emph{computable} structural causal models (Thm. \ref{thm:equivalence}). The equivalence in turn implies soundness and completeness of our axiomatizations with respect to this interpretation as well.

\subsection{Relation to Previous Work}
While deterministic causal counterfactuals and probabilistic logics have both received extensive treatment independently, the present contribution appears to be the first systematic study of probabilistic counterfactuals. Our work thus synthesizes and improves upon a long line of previous work. Axioms for causal conditionals interpreted over structural causal models are well understood \cite{GallesPearl,Halpern2000,Pearl2009,Zhang,II2019} and play a distinct role in causal reasoning tasks such as identification \cite{Shpitser,Bareinboim}. Indeed, some prominent approaches to causal learning and identification  even employ algorithmic logic-based techniques \cite{Hyttinen1,Hyttinen,Trian}.

Meanwhile, much is known about formalized probability calculi. \cite{Fagin} considered a probability logic built over a language of polynomials sufficiently expressive to encompass essentially all ordinary probabilistic reasoning about propositional facts, including Bayesian inference, conditional independence, and so on. However they left open the problem of explicit axiomatization.
A (strongly) complete axiomatization was later provided by \cite{Perovic} using an infinitary proof rule. Whereas our main interest is causal reasoning beyond the first level of the hierarchy, Thm. \ref{thm:soundcomplete} incidentally establishes the first (weakly) complete \emph{finitary} axiomatization for ``pure'' probability logic over a language of polynomials. Moving to the second and third levels of the hierarchy, Thm. \ref{thm:soundcomplete} also presents the first combined axiomatization for probabilistic reasoning about causal counterfactuals.
{At the second level, we draw upon an existing characterization by \cite{TianKP06}.}

\cite{Fagin} established a complexity upper-bound ($\mathsf{PSPACE}$) for their satisfiability problem. On all three levels of the hierarchy, we obtain the same upper bound for our decision problem (Thm. \ref{thm:complexity}).
Both arguments rely crucially on the procedure given by \cite{Canny} to decide the existential theory of a real closed field.
It has been previously suggested to apply cylindrical algebraic decomposition (which decides the full first-order theory of real closed fields) to causal questions \cite{Geiger:1999:QES:2073796.2073822}.

Encoding causal knowledge in an implicit way via a generative probabilistic program has been explored recently by a number of research groups \cite{Lake2017,Pyro,Tavares}. Deterministic conditionals over ``simulation programs'' have been axiomatized \cite{II2018}, showing that in general such an interpretation validates strictly fewer principles than structural causal models. This weaker axiomatic system was also embedded in a probability logic with linear inequalities \cite{Ibeling18}. It is possible, however, to restrict the class of probabilistic programs so as to ensure equivalence with (an appropriate class of) structural causal models \cite{II2019}. We draw on all of this work in what follows.

\section{Structural Models}
We are interested in \emph{structural causal models} (see, e.g., \cite{Pearl2009}) defined over a signature consisting of a fixed set $\mathbf{V}$ of \emph{endogenous} variables.
Each model is also defined over a set $\mathbf{U}$ of \emph{exogenous} variables.
Every $R \in \mathbf{V} \cup \mathbf{U}$ takes on a value from an admissible set $\text{Val}(R)$. While $\mathbf{V}$ and $\mathbf{U}$ may be infinite, we assume $\text{Val}(V)$ is finite for all $V \in \mathbf{V}$. 
Given a set of variables $\mathbf{S} \subseteq \mathbf{V} \cup \mathbf{U}$, we call an assignment $\mathbf{s} \in \Val(\mathbf{S})$ of each variable $S \in \mathbf{S}$ to a value $\mathbf{s}(S) \in \text{Val}(S)$ an \emph{instantiation} of $\mathbf{S}$.
\begin{definition}[Structural Causal Model]\label{def:scm}
We define a SCM to be a pair $\mathfrak{M} = (\mathcal{F},P)$, where $P$ is a probability measure on a $\sigma$-algebra $\Sigma$ of instantiations $\mathbf{u}$ of $\mathbf{U}$, and $\mathcal{F} = \{f_V\}_{V\in\mathbf{V}}$ is a set of functions $f_V: \Val(\mathbf{V} \cup \mathbf{U}) \to \Val(V)$, one for each endogenous variable.
\end{definition}
Thus every $f_V$ is a function from endogenous instantiations $\mathbf{v}$ and exogenous instantiations $\mathbf{u}$ to a value $f_V(\mathbf{v}, \mathbf{u}) \in \text{Val}(V)$.
Fixing $\mathbf{u}$, the simultaneous equations $\mathcal{F}(\mathbf{u})$ over $\mathbf{V}$ may have any number of solutions.
In order to guarantee unique solutions, we henceforth consider only \emph{recursive}\footnote{Recursiveness is sufficient though not necessary for unique existence: see \cite[Ex.~2.1]{Halpern2000}.} SCMs in the following sense:\begin{definition}
\label{def:recursivescm}
$\mathfrak{M}$ is recursive if there is a well-order $\prec$ on $\mathbf{V}$ such that $f_V(\mathbf{v}_1, \mathbf{u}) = f_V(\mathbf{v}_2, \mathbf{u})$ for any $V$, $\mathbf{u}$, and $\mathbf{v}_1, \mathbf{v}_2$ such that $\mathbf{v}_1\big(V'\big) = \mathbf{v}_2\big(V'\big)$ for all $V' \prec V$.
\end{definition}
Def.~\ref{def:recursivescm} generalizes prior notions of recursiveness, e.g., \emph{open-universe} SCMs \cite[Def.~1]{II2019} where $\prec$ has order type $\omega$, or the strictly finite setting of \cite{Halpern2000}.
Transfinite induction shows $\mathcal{F}(\mathbf{u})$ has a unique solution for any $\mathbf{u}$, and
$P(\mathbf{U})$ thus defines the obvious joint probability distribution $P_{\mathfrak{M}}(\mathbf{V})$.
The same holds under any causal intervention, defined standardly \cite{Spirtes,Pearl2009}:
\begin{definition}[Intervention]
  \label{defn:scmintervention}
An intervention is a partial function $i : V \mapsto \text{Val}(V)$. It specifies variables $\mathrm{dom}(i) \subseteq \mathbf{V}$ to be held fixed, and the values to which they are fixed. Intervention $i$ induces a mapping of SCMs, also denoted $i$, so that $i(\mathfrak{M})$ is identical to $\mathfrak{M}$, but with $f_V$ replaced by the constant function $f_V(\cdot) = i(V)$ for each $V \in \mathrm{dom}(i)$. We say $i$ is finite whenever $\mathrm{dom}(i)$ is finite.
\end{definition}

In order to establish a correspondence (Thm. \ref{thm:equivalence}) with probabilistic programs, we introduce a further restriction on SCMs, following previous work on computable causal models \cite{IM,II2019}. Def. \ref{compmodel} below assumes a very simple ``coin-flip'' probability space; we leave it as an exercise to show that there is no loss of generality compared to using any computable probability space as in, e.g., \cite{Ackerman2019}, including any standard continuous probability distributions.
\begin{definition} \label{compmodel}
  $\mathfrak{M} = (\mathcal{F}, P)$ is \emph{computable} if (1) its exogenous variables consist of infinitely many binary $U_1,U_2,\dots$, i.i.d. and uniform under $P$, and (2)
  the collection $\mathcal{F} = \{f_V\}_{V}$ is uniformly computable \cite{Weihrauch}.
\end{definition}
Call a model $\mathfrak{M} = (\mathcal{F}, P)$ \emph{measurable} if under every finite intervention $i$, the joint distribution $P_{i(\mathfrak{M})}(\mathbf{V})$ is well-defined. The next Fact is straightforward.
\begin{fact}
  \label{prop:measurable}
Every computable SCM is measurable.
\end{fact}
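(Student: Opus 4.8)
The plan is to show that for a computable SCM $\mathfrak{M}=(\mathcal{F},P)$ and every finite intervention $i$, the map sending an exogenous instantiation $\mathbf{u}$ to the unique endogenous solution of $i(\mathfrak{M})(\mathbf{u})$ is measurable; its pushforward under $P$ is then the well-defined distribution $P_{i(\mathfrak{M})}(\mathbf{V})$. Here the exogenous space is $\{0,1\}^\infty$ with its product (equivalently Borel) $\sigma$-algebra $\Sigma$ and uniform product measure $P$. I would first reduce to the uninterverened model: a finite intervention replaces $f_V$ by a constant for the finitely many $V \in \mathrm{dom}(i)$, and constants are computable and trivially respect any well-order $\prec$, so $i(\mathfrak{M})$ is again computable and recursive. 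Hence it suffices to treat $\mathfrak{M}$ itself. I would then reduce further to single coordinates: since each $\Val(V)$ is finite and discrete and $\Val(\mathbf{V})=\prod_V \Val(V)$ carries the product $\sigma$-algebra generated by coordinate projections, the solution map is measurable iff each $g_V\colon \mathbf{u}\mapsto \mathbf{v}_{\mathbf{u}}(V)$ is measurable, where $\mathbf{v}_{\mathbf{u}}$ denotes the unique solution guaranteed by recursiveness.

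The key step is to argue that each $g_V$ is computable, hence continuous, hence Borel measurable. Fixing $\mathbf{u}$, the uniform machine for $\mathcal{F}$ computing $f_V(\mathbf{v}_{\mathbf{u}},\mathbf{u})$ halts, so it reads only finitely many coordinates of its endogenous argument and finitely many bits of $\mathbf{u}$; by recursiveness every queried endogenous coordinate lies $\prec V$. I would form the dependency tree rooted at $V$, whose children are the coordinates queried when computing $g_V$, and so on recursively. This tree is finitely branching, and each of its branches is a $\prec$-descending chain; since $\prec$ is a well-order there is no infinite branch, so by K\"{o}nig's lemma the tree is finite. Therefore $g_V(\mathbf{u})$ is determined by a finite prefix of $\mathbf{u}$: every $\mathbf{u}'$ agreeing with $\mathbf{u}$ on that prefix yields $g_V(\mathbf{u}')=g_V(\mathbf{u})$. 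Thus $g_V$ is locally constant, each preimage $g_V^{-1}(c)$ is open, and $g_V$ is measurable with respect to $\Sigma$.

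Combining these, every coordinate of the solution map (for the intervened model $i(\mathfrak{M})$ as well, by the reduction above) is $\Sigma$-measurable, so the full solution map is measurable and $P_{i(\mathfrak{M})}(\mathbf{V})$ is well-defined as the pushforward of $P$. The hard part will be the finiteness argument in the second step: I must rule out an infinite regress of endogenous dependencies, which can threaten when $\prec$ has limit elements and a single $f_V$ depends on infinitely many predecessors. This is exactly where well-foundedness of $\prec$ together with finite branching (via K\"{o}nig's lemma) is essential, and where the "halting after reading a finite prefix" property of the uniform computation does the real work of delivering continuity and hence measurability.
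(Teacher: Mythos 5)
Your overall strategy --- show the solution map $\mathbf{u}\mapsto\mathbf{v}_{\mathbf{u}}$ is measurable coordinate-by-coordinate via a computability/continuity argument --- is different from the paper's, and it has a genuine gap at its key step. You write that ``by recursiveness every queried endogenous coordinate lies $\prec V$.'' Recursiveness (Def.~\ref{def:recursivescm}) is an \emph{extensional} condition on the function $f_V$: its value does not depend on coordinates $\succeq V$. It does not constrain the \emph{machine}: a machine computing a recursive $f_V$ may perfectly well query $\mathbf{v}(W)$ for some $W\succeq V$ and then discard the answer. In your recursive evaluation scheme such a query forces you to compute $g_W(\mathbf{u})$, whose machine may in turn query $V$, so your dependency tree need not be well-founded as defined. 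Relatedly, the tree itself is defined circularly: its children are ``the coordinates queried when computing $g_V$,'' but which coordinates get queried depends on the answers to earlier queries, i.e., on the very values you are trying to show are well-defined. Both problems are fixable --- answer any query to a coordinate $\succeq V$ with an arbitrary default value (totality of the machine guarantees halting, and recursiveness guarantees the output is unchanged), and organize the argument as a transfinite induction on $\prec$ showing each $g_V$ is locally constant --- but as written the central step does not go through.

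For contrast, the paper's proof sidesteps the solution map entirely. It fixes a candidate instantiation $\mathbf{v}\in\Val(\mathbf{V})$ and considers the consistency events $\mathfrak{u}_V=\{\mathbf{u}: f_V(\mathbf{v},\mathbf{u})=\mathbf{v}(V)\}$; since $\mathbf{v}$ is supplied as a fixed input, the halting computation of $f_V(\mathbf{v},\mathbf{u})$ reads only finitely many bits of $\mathbf{u}$, so $\mathfrak{u}_V$ is a countable union of cylinders, and the solution set for $\mathbf{v}$ is the countable intersection $\bigcap_V \mathfrak{u}_V$. No recursion through the endogenous order, no dependency tree, and no appeal to K\"onig's lemma is needed. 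If you want to keep your route, you should either adopt the paper's ``fix $\mathbf{v}$ and check consistency'' trick or carefully justify the termination of the recursive evaluation along the lines sketched above.
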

\begin{proof}
Let $\mathbf{v} \in \Val(\mathbf{V})$ and let $\Units = \bigcap_{V \in \mathbf{V}} \Units_V$ where $\Units_V = \big\{ \Unit : f_V(\EndoInst, \Unit)  = \EndoInst(V)\big\}$. Letting $\Sigma$ be our $\sigma$-algebra, $P_{i(\mathfrak{M})}(\EndoInst)$ is well-defined if $\Units \in \Sigma$. It suffices to show that $\Units_V \in \Sigma$ for all $V$ since $\Sigma$ closes under countable intersection. There is a machine that halts outputting the value $f_V(\EndoInst, \Unit)$ for any $\Unit \in \Units_V$. By then it has seen only finitely many exogenous bits, whose values we write in a finite vector $\vec{u}(\Unit)$.
Thus the cylinder set $\Units'\big(\vec{u}(\Unit)\big)$ of $\Unit'$ that agree with $\vec{u}(\Unit)$ wherever the latter is defined is contained in $\Units_V$. So letting $\Units'_V = \bigcup_{\Unit \in \Units_V} \Units'\big(\vec{u}(\Unit)\big)$, we have $\Units'_V \subseteq \Units_V$. Every cylinder is in $\Sigma$ and there are only countably many cylinders, so $\Units'_V \in \Sigma$; obviously $\Units_V \subseteq \Units'_V$ so $\Units_V \in \Sigma$ as desired.
\end{proof}
Let $\mathcal{M}$ be the class of measurable, recursive SCMs.\footnote{Any two models in $\mathcal{M}$ have the same endogenous signature $\mathbf{V}$. But they do not necessarily share the same set of exogenous variables $\mathbf{U}$; SCMs are thus \emph{non-parametric}.}
Further, let $\mathcal{M}^*$ be the computable subclass of $\mathcal{M}$.

\subsection{Causal Influence}

We now define a relation $\influences$ of \emph{direct causal influence} (cf. \cite{10.1093/bjps/axi147}).
Say $X \influences_{\mathfrak{M}} Y$ if there are endogenous instantiations $\mathbf{v}_1, \mathbf{v}_2 \in \Val(\mathbf{V})$ differing only at $X$ and some non-negligible set $\Units \subseteq \Val(\mathbf{U})$ such that
$f_Y(\mathbf{v}_1, \mathbf{u}) = f_Y(\mathbf{v}_1, \mathbf{u}') \neq f_Y(\mathbf{v}_2, \mathbf{u}) = f_Y(\mathbf{v}_2, \mathbf{u}')$ for any $\mathbf{u}, \mathbf{u}' \in \Units$.
This in turn induces a directed graph
$\mathcal{G}_\mathfrak{M} = (\mathbf{V},\influences_{\mathfrak{M}})$.
The following observation shows that recursiveness generalizes the ubiquitous \emph{semi-Markov}
(i.e., finite acyclicity) assumption.
Fix a finite $\mathbf{V}_{\centernot{\infty}}$ and let $\mathcal{M}_{\centernot{\infty}}$ be the class of SCMs over $\mathbf{V}_{\centernot{\infty}}$ in which $P(\mathbf{u}) > 0$ for every $\mathbf{u}$.
Then:
\begin{proposition}
\label{prop:semimarkovisrecursive}
If $\mathfrak{M} \in \mathcal{M}$ is recursive, then $\mathcal{G}_\mathfrak{M}$ is well-founded.
The converse holds if $\mathfrak{M} \in \mathcal{M}_{\centernot{\infty}}$.
\end{proposition}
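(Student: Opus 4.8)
The plan is to prove the two implications separately, in each case relating edges of $\mathcal{G}_{\mathfrak{M}}$ to the functional dependencies of $\mathcal{F}$. For the forward direction, assume $\mathfrak{M}$ is recursive with witnessing well-order $\prec$, and I would show the edge relation is contained in $\prec$, i.e.\ that $X \influences_{\mathfrak{M}} Y$ implies $X \prec Y$. Granting this, $\influences_{\mathfrak{M}}$ is a subrelation of the well-order $\prec$, so any infinite descending $\influences_{\mathfrak{M}}$-chain would be an infinite descending $\prec$-chain---impossible---and hence $\mathcal{G}_{\mathfrak{M}}$ is well-founded. To prove the containment, suppose $X \influences_{\mathfrak{M}} Y$ but $X \not\prec Y$; by totality $Y \preceq X$, so the witnessing $\mathbf{v}_1, \mathbf{v}_2$ (which differ only at $X$) agree on every $V' \prec Y$. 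Def.~\ref{def:recursivescm} then forces $f_Y(\mathbf{v}_1, \mathbf{u}) = f_Y(\mathbf{v}_2, \mathbf{u})$ for all $\mathbf{u}$, contradicting the clause in the definition of $\influences_{\mathfrak{M}}$ demanding that these values differ on a non-negligible set.

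For the converse I would work in $\mathcal{M}_{\centernot{\infty}}$ and first exploit that $P(\mathbf{u}) > 0$ for every $\mathbf{u}$, so that every singleton $\{\mathbf{u}\}$ is non-negligible. Instantiating the definition of $\influences_{\mathfrak{M}}$ with $\Units = \{\mathbf{u}\}$ makes both constancy clauses vacuous, so that $X \influences_{\mathfrak{M}} Y$ becomes equivalent to the existence of $\mathbf{v}_1, \mathbf{v}_2$ differing only at $X$ and a single $\mathbf{u}$ with $f_Y(\mathbf{v}_1, \mathbf{u}) \neq f_Y(\mathbf{v}_2, \mathbf{u})$. Contrapositively, $X \ninfluences_{\mathfrak{M}} Y$ holds precisely when $f_Y$ does not depend on its $X$-coordinate. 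Because $\mathbf{V}_{\centernot{\infty}}$ is finite, well-foundedness of $\mathcal{G}_{\mathfrak{M}}$ coincides with acyclicity; thus $\mathcal{G}_{\mathfrak{M}}$ is a finite DAG and admits a topological order $\prec$ (a well-order, being finite and total) satisfying $X \influences_{\mathfrak{M}} Y \Rightarrow X \prec Y$.

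It then remains to verify that $\prec$ witnesses recursiveness. Fixing $V$ and any $V' \succeq V$: when $V' \succ V$ we have $V' \not\prec V$, so the topological property gives $V' \ninfluences_{\mathfrak{M}} V$; when $V' = V$, well-foundedness rules out the self-loop $V \influences_{\mathfrak{M}} V$ (which would generate an infinite descending chain), so again $V' \ninfluences_{\mathfrak{M}} V$. By the characterization of the previous paragraph, $f_V$ is in each case independent of its $V'$-coordinate, whence $f_V$ depends only on coordinates $V' \prec V$---exactly the condition of Def.~\ref{def:recursivescm}.

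The forward containment and the topological sort are routine; the step I expect to demand the most care is the equivalence, special to $\mathcal{M}_{\centernot{\infty}}$, between the absence of an influence edge and genuine functional independence. This is where positivity of $P$ is indispensable: without it $f_Y$ could depend on $X$ only over a negligible set of exogenous values, contributing no edge to $\mathcal{G}_{\mathfrak{M}}$ yet still breaking recursiveness, so the converse would fail outside $\mathcal{M}_{\centernot{\infty}}$.
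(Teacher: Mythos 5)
Your proof is correct and follows essentially the same route as the paper's: the forward direction embeds $\influences_{\mathfrak{M}}$ in the witnessing well-order $\prec$, and the converse uses positivity of $P$ to reduce influence to pointwise functional dependence and then takes a topological order of the finite DAG. The one step you elide---passing from independence of $f_V$ in each coordinate $V' \succeq V$ separately to joint independence in all of them at once---is exactly what the paper handles by interpolating a finite sequence of instantiations each differing in a single variable, a routine step given finiteness of $\mathbf{V}_{\centernot{\infty}}$.
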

\begin{proof}
If $\mathfrak{M}$ is recursive under $\prec$, then $X \influences_{\mathfrak{M}} Y$ entails $X \prec Y$.
Converse: we claim that $\mathfrak{M}$ is recursive under any topological order $\prec$ of the dag $\mathcal{G}_\mathfrak{M}$.
Suppose not so that there are $V$, $\mathbf{u}$ and $\mathbf{v}_1$, $\mathbf{v}_2$ differing only on $\mathbf{V}' = \{ V' : V' \succeq V\}$ such that $f_V(\mathbf{v}_1, \mathbf{u}) \neq f_V(\mathbf{v}_2, \mathbf{u})$.
Since $\mathbf{V}'$ is finite,
there is a sequence $\mathbf{v}^0, \dots, \mathbf{v}^n$
such that $\mathbf{v}^0 = \mathbf{v}_1$, $\mathbf{v}^n = \mathbf{v}_2$, and each $\mathbf{v}^i$ differs from $\mathbf{v}^{i-1}$ at a single variable (cf. \cite[Prop.~1]{II2019}).
Let $i$ be the first such that $f_V(\mathbf{v}^i, \mathbf{u}) \neq f_V(\mathbf{v}^{i-1}, \mathbf{u})$ and $V^i$ be that variable at which $\mathbf{v}^i, \mathbf{v}^{i-1}$ differ.
Then $V^i \influences_{\mathfrak{M}} V$ while $V \preceq V^i$, a contradiction.
\end{proof}

We say $\mathfrak{M}$ is \emph{Markov} if each variable is $P_{\mathfrak{M}}$-independent of its non-descendants (in $\mathcal{G}_{\mathfrak{M}}$) conditional on its parents.
The Markov condition is guaranteed provided the exogenous variables are jointly independent and every endogenous variable depends only on one exogenous variable \cite[Thm. 1.4.1]{Pearl2009}. We define  \emph{d-separation} on a dag 
$\mathcal{G}$ standardly, and write $\big((\mathbf{X}\Perp\mathbf{Y})|\mathbf{Z}\big)_\mathcal{G}$ to say that the variables $\mathbf{X}$ are d-separated from $\mathbf{Y}$ given $\mathbf{Z}$. In Markov structures d-separation guarantees conditional independence.

\section{Probabilistic Conditionals}
\subsection{Syntax}
We define a succession of language fragments as follows, where $V \in \mathbf{V}$ and $v \in \Val(V)$:
\begin{eqnarray*}
 \mathcal{L}_{\textnormal{int}} & ::= & \top \quad | \quad V=v \quad | \quad \mathcal{L}_{\textnormal{int}} \wedge  \mathcal{L}_{\textnormal{int}} \\
 \mathcal{L}_{\textnormal{prop}} & ::= & V=v \quad | \quad \neg \mathcal{L}_{\textnormal{prop}} \quad | \quad  \mathcal{L}_{\textnormal{prop}} \wedge \mathcal{L}_{\textnormal{prop}} \\
  \mathcal{L}_{\textnormal{cond}} & ::= & [\mathcal{L}_{\textnormal{int}}]\mathcal{L}_{\textnormal{prop}} \\
\mathcal{L}_{\textnormal{full}} & ::= & \mathcal{L}_{\textnormal{cond}} \quad | \quad \neg \mathcal{L}_{\textnormal{full}} \quad | \quad  \mathcal{L}_{\textnormal{full}} \wedge \mathcal{L}_{\textnormal{full}}
\end{eqnarray*}
Based on these fragments we define a sequence of three increasingly expressive probabilistic languages $\{\mathcal{L}_i\}_{i = 1, 2, 3}$. Each language $\mathcal{L}_i$ speaks about probabilities over a \emph{base language} $\mathcal{L}_i^{\textnormal{base}}$. The base languages are
\[
  \mathcal{L}_1^{\textnormal{base}} = \Ll_{\textnormal{prop}}, \quad \mathcal{L}_2^{\textnormal{base}} = \Ll_{\textnormal{cond}}, \quad \mathcal{L}_3^{\textnormal{base}} = \Ll_{\textnormal{full}}.
\]
Our languages describe facts about the probabilities that base language formulas hold. As our formulas are finitary, such facts correspond to polynomials in these probabilities. Let us make this precise.
Fixing a terminal set $T$, define the \emph{polynomial terms in the variables $T$} to be those $\mathbf{t}$ generated by this grammar (where $T$ generates any element of $T$):
\begin{eqnarray*}
\mathbf{t} & ::= & T \quad | \quad \mathbf{t} + \mathbf{t} \quad | \quad \mathbf{t} \cdot \mathbf{t} \quad | \quad - \mathbf{t}.
\end{eqnarray*}
Then the terms of $\Ll_i$ are
polynomials over probabilities of base formulas, i.e., polynomial terms in the variables $\big\{ \mathbb{P}(\epsilon) : \epsilon \in \Lbasei \big\}$. The language $\mathcal{L}_{i}$, $i = 1,2,3$, is then a propositional language of term inequalities:
\begin{eqnarray*}
 \mathcal{L}_{i} & ::= & \mathbf{t} \geqslant \mathbf{t} \quad | \quad \neg  \mathcal{L}_{i} \quad | \quad  \mathcal{L}_{i} \wedge  \mathcal{L}_{i}
 \end{eqnarray*}
 where $\mathbf{t} $ is a term of $\Ll_i$.
We employ the following abbreviations. For $\mathcal{L}_{\textnormal{prop}}$ and $\mathcal{L}_{\textnormal{full}}$ we take $\bot$ to stand for any propositional contradiction, and $\top$ for any propositional tautology. For terms: $\underline{0}$ for $\mathbb{P}(\bot)$, $\underline{1}$ for $\mathbb{P}(\top)$. For $\Ll_i$ formulas, we write $\mathbf{t}_1 \equiv \mathbf{t}_2$ for $(\mathbf{t}_1 \geqslant \mathbf{t}_2) \wedge (\mathbf{t}_2 \geqslant \mathbf{t}_1)$, and $\mathbf{t}_1 > \mathbf{t}_2$ for $(\mathbf{t}_1 \geqslant \mathbf{t}_2) \wedge \neg (\mathbf{t}_2 \geqslant \mathbf{t}_1)$. Note that we may use any rational number as a term via representing its numerator as a sum of $\underline{1}$s and clearing its denominator through an inequality $\mathbf{t} \geqslant \mathbf{t}$, and we write $\underline{q}$ for a rational $q$ thus considered as a term.

Strictly speaking, $\mathbb{P}(\beta)$ for $\beta \in \Ll_{\textnormal{prop}}$ is not a well-formed term in $\Ll_2$ or $\Ll_3$. We nonetheless use this notation with the understanding that $\mathbb{P}(\beta)$ is a shorthand for $\mathbb{P}([\top]\beta)$. $\Ll_2$ and $\Ll_3$ thus extend $\Ll_1$ in this sense.

$\Ll_1$, $\Ll_2$, and $\Ll_3$ correspond to the three levels of the causal hierarchy as proposed by \cite{Shpitser,Pearl2009}; see also \cite{BCII}. $\Ll_1$ is simply the language of probability, capturing statements like $\mathbb{P}(Y=y | X=x) \geqslant \underline{1/2}$, which is shorthand for $\big(\mathbb{P}(\top)+\mathbb{P}(\top)\big)\cdot \mathbb{P}(X=x \,\wedge\, Y=y) \geqslant  \mathbb{P}(X=x)$. $\Ll_2$ encompasses assertions about so-called \emph{causal effects}, e.g., statements like $\mathbb{P}([X=x]Y=y) \geqslant \underline{q}$.

\subsection{Semantics}
A model is simply a measurable SCM $\mathfrak{M} = (\mathcal{F},P)$. Since $\mathfrak{M}$ is well-founded, each $\Unit$ determines the values of all endogenous variables. Thus, for $\beta \in \mathcal{L}_{\textnormal{prop}}$ we will write $\mathcal{F},\Unit \models \beta$, defined in the obvious way.
For $\alpha \in \mathcal{L}_{\textnormal{int}}$ we define the intervention operation $i_\alpha$ so that $i_\alpha(\mathcal{F})$ is the result of applying the interventions specified by $\alpha$ to $\mathcal{F}$ (Def. \ref{defn:scmintervention}). Then $\mathcal{F},\Unit \models [\alpha]\beta$ just in case $i_{\alpha}(\mathcal{F}),\Unit \models \beta$. We have thus defined $\mathcal{F}, \Unit \models \epsilon$ for all $\epsilon \in \Lfull$.

Next, for any $\mathfrak{M} = (\mathcal{F}, P)$ define the set $S_{\mathfrak{M}}(\epsilon) = \{\mathbf{u}: \mathcal{F}, \mathbf{u} \models \epsilon\}$. Measurability of $\mathfrak{M}$ guarantees that $S_{\mathfrak{M}}(\epsilon)$ is always measurable. Toward specifying the semantics of $\mathcal{L}_i$ we define $\semantics{\mathbf{t}}_{\mathfrak{M}}$ recursively, with the crucial clause given by $\semantics{\mathbb{P}(\epsilon)}_{\mathfrak{M}} = P\big(S_{\mathfrak{M}}(\epsilon)\big)$. Satisfaction of $\varphi \in \mathcal{L}_i$  is as expected: $\mathfrak{M} \models \mathbf{t}_1 \geqslant \mathbf{t}_2$ iff $\semantics{\mathbf{t_1}}_{\mathfrak{M}} \geq \semantics{\mathbf{t_2}}_{\mathfrak{M}}$, $\mathfrak{M} \models \neg \varphi$ iff $\mathfrak{M} \notmodels \varphi$, and $\mathfrak{M} \models \varphi \wedge \psi$ iff $\mathfrak{M} \models \varphi$ and $\mathfrak{M} \models \psi$.
As in previous work, it is easy to see that none of the languages  $\mathcal{L}_1,\mathcal{L}_2,\mathcal{L}_3$ is compact \cite{Perovic,II2018,II2019}.
Consequently Thm. \ref{thm:soundcomplete} claims weak completeness only.

\subsection{Comparing Expressivity}
With a precise semantic interpretation of our three languages
in hand, we can now show rigorously that they form a strict hierarchy, in the sense that models may be distinguishable only by moving up to higher levels of the hierarchy.
\begin{proposition}\label{l1l2} $\mathcal{L}_2$ is strictly more expressive than $\mathcal{L}_1$. \end{proposition}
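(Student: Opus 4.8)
The plan is to unpack ``strictly more expressive'' into two claims and prove each: (i) every $\mathcal{L}_1$ formula defines the same class of models as some $\mathcal{L}_2$ formula, so $\mathcal{L}_2$ is at least as expressive; and (ii) some $\mathcal{L}_2$ formula defines a class of models that is not $\mathcal{L}_1$-definable. Claim (i) is essentially the syntactic convention already noted in the excerpt: given $\varphi \in \mathcal{L}_1$, replace each occurrence of a term $\mathbb{P}(\beta)$, for $\beta \in \Lprop$, by $\mathbb{P}([\top]\beta)$. Since $\top \in \Lint$ and $\beta \in \Lprop$, the result $[\top]\beta$ lies in $\Lcond$, so the rewritten formula $\varphi'$ is a genuine $\mathcal{L}_2$ formula. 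As $i_\top$ is the empty intervention, $S_{\mathfrak{M}}([\top]\beta) = S_{\mathfrak{M}}(\beta)$ and hence $\semantics{\mathbb{P}([\top]\beta)}_{\mathfrak{M}} = \semantics{\mathbb{P}(\beta)}_{\mathfrak{M}}$ for every model; so $\varphi$ and $\varphi'$ agree everywhere, establishing (i).

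For claim (ii) I would exhibit two models $\mathfrak{M}_1,\mathfrak{M}_2 \in \mathcal{M}$ that agree on every $\mathcal{L}_1$ formula yet are separated by an $\mathcal{L}_2$ formula. Assuming the signature contains two nontrivial endogenous variables $X,Y$ (with every remaining variable of $\mathbf{V}$, if any, fixed to one common constant function in both models), take a single fair exogenous bit $U_1$. In $\mathfrak{M}_1$ set $f_X(\mathbf{v},\mathbf{u}) = \mathbf{u}(U_1)$ and $f_Y(\mathbf{v},\mathbf{u}) = \mathbf{v}(X)$, so $Y$ copies the realized value of $X$; in $\mathfrak{M}_2$ set $f_X(\mathbf{v},\mathbf{u}) = \mathbf{u}(U_1)$ and $f_Y(\mathbf{v},\mathbf{u}) = \mathbf{u}(U_1)$, so $X$ and $Y$ share a common exogenous cause. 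Both are recursive and computable, hence in $\mathcal{M}$. A direct computation gives the same observational joint $P_{\mathfrak{M}}(\mathbf{V})$ for both: mass $1/2$ on each of the outcomes with $X = Y$. Under the intervention $do(X{=}1)$, however, $\mathfrak{M}_1$ forces $Y = 1$ with probability $1$, whereas in $\mathfrak{M}_2$ the value of $Y$ is untouched and still equals $U_1$, giving $\mathbb{P}([X=1]Y=1) = 1/2$. Thus the $\mathcal{L}_2$ formula $\mathbb{P}([X=1]Y=1) \geqslant \1$ holds in $\mathfrak{M}_1$ but fails in $\mathfrak{M}_2$.

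It remains to argue that $\mathfrak{M}_1$ and $\mathfrak{M}_2$ have the same $\mathcal{L}_1$-theory. The key lemma is that the $\mathcal{L}_1$-theory of any model depends only on its observational distribution $P_{\mathfrak{M}}(\mathbf{V})$: each $\mathcal{L}_1$ term is a polynomial in the quantities $\mathbb{P}(\beta)$ for $\beta \in \Lprop$, every such $\beta$ mentions only finitely many variables, and $\semantics{\mathbb{P}(\beta)}_{\mathfrak{M}} = P\big(S_{\mathfrak{M}}(\beta)\big)$ is determined by the corresponding finite marginal of $P_{\mathfrak{M}}(\mathbf{V})$. A routine induction on term and formula structure then shows that models with equal observational distributions evaluate every $\mathcal{L}_1$ term identically and so satisfy exactly the same $\mathcal{L}_1$ formulas. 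Since $\mathfrak{M}_1$ and $\mathfrak{M}_2$ share their observational distribution, no $\mathcal{L}_1$ formula separates them, so the class defined by $\mathbb{P}([X=1]Y=1)\geqslant \1$ is not $\mathcal{L}_1$-definable, yielding strictness.

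The conceptual content, and the step I expect to be the main obstacle to state cleanly, is the separation lemma of the third paragraph---the fact that $\mathcal{L}_1$ ``sees'' only the observational distribution and nothing interventional. The inclusion (i) and the verification of the two models' distributions are routine, and the choice of the confounding-versus-direct-causation pair is the familiar device ensuring identical observational behavior but divergent causal effect; the care needed is only in handling arbitrary (possibly infinite) signatures $\mathbf{V}$, which is managed by the finite-dependence observation.
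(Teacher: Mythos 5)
Your proposal is correct and follows essentially the same route as the paper: exhibit two recursive models with the identical observational distribution (probability $1/2$ on each of $X=Y=0$ and $X=Y=1$) that are separated by $\mathbb{P}([X=1]Y=1)$, and note that $\mathcal{L}_1$-terms are determined by the observational distribution alone. The only cosmetic difference is your choice of second model (a common exogenous cause rather than the paper's reversed chain $Y:=U$, $X:=Y$), which changes nothing of substance.
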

\begin{proof}
Consider $\mathfrak{M}_1$ with $U \sim \textsf{Bernoulli}(0.5)$ and $X := U$ while $Y := X$; in  $\mathfrak{M}_2$ we have $Y := U$ and $X := Y$. It is easy to see by an induction on terms in $\mathcal{L}_1$ that $\semantics{\mathbf{t}}_{\mathfrak{M}_1} = \semantics{\mathbf{t}}_{\mathfrak{M}_2}$,
and thus $\mathfrak{M}_1$ and $\mathfrak{M}_2$ validate the same $\mathcal{L}_1$ formulas. Yet, $\mathfrak{M}_1 \models \mathbb{P}([X=1]Y=1) \equiv \underline{1}$, while $\mathfrak{M}_2 \notmodels \mathbb{P}([X=1]Y=1) \equiv \underline{1}$.  In particular the schema $\mathbb{P}(\beta | \alpha) \equiv \mathbb{P}([\alpha]\beta)$ is also falsified by $\mathfrak{M}_2$, a reflection of the distinction between observation and intervention.
\end{proof}
\begin{proposition}\label{l2l3} $\mathcal{L}_3$ is strictly more expressive than $\mathcal{L}_2$. \end{proposition}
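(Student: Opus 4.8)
The plan is to mirror the structure of the proof of Prop.~\ref{l1l2}: since $\Lcond \subseteq \Lfull$, every $\mathcal{L}_2$ formula is already an $\mathcal{L}_3$ formula with the same semantics, so it suffices to produce two models $\mathfrak{M}_1, \mathfrak{M}_2$ that validate exactly the same $\mathcal{L}_2$ formulas yet are separated by some $\mathcal{L}_3$ formula. I would take $\mathbf{V} = \{X, Y\}$ with both variables binary, and in both models draw two independent $\Bernoulli(0.5)$ exogenous bits $U_X, U_Y$ and set $X := U_X$. The two models differ only in the equation for $Y$: in $\mathfrak{M}_1$ I set $Y := U_Y$, so that $X \ninfluences Y$, while in $\mathfrak{M}_2$ I set $Y := X \oplus U_Y$. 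The point of the XOR is that it creates a genuine causal link $X \influences Y$ whose \emph{average} effect is nonetheless invisible, since fixing $X$ merely permutes the two equiprobable values of $Y$.

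The first task is to verify that $\mathfrak{M}_1$ and $\mathfrak{M}_2$ agree on all of $\mathcal{L}_2$. By an induction on terms exactly as in Prop.~\ref{l1l2}, this reduces to checking that $P\big(S_{\mathfrak{M}_1}(\epsilon)\big) = P\big(S_{\mathfrak{M}_2}(\epsilon)\big)$ for every base formula $\epsilon = [\alpha]\beta \in \Lcond$; equivalently, that the interventional distribution $P_{i_\alpha(\mathfrak{M})}(\mathbf{V})$ coincides across the two models for every $\alpha \in \Lint$. As $\mathbf{V}$ has only two binary variables there are finitely many interventions to inspect---the empty intervention, $[X=x]$, $[Y=y]$, and $[X=x,Y=y]$---and in each case a direct computation shows the induced joint distribution over $(X,Y)$ is identical. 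For instance, under no intervention both models make $(X,Y)$ uniform on $\{0,1\}^2$, and under $[X=x]$ both yield $X=x$ with $Y$ uniform, because $\oplus$ with one argument fixed is a bijection in $U_Y$. This uniform-and-independent structure is precisely what makes the average causal effect of $X$ on $Y$ null in $\mathfrak{M}_2$ just as in $\mathfrak{M}_1$.

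It remains to separate the models with an $\mathcal{L}_3$ formula, for which I would use the cross-world counterfactual $\varphi := \big(\mathbb{P}([X=0]Y=0 \wedge [X=1]Y=1) \equiv \underline{0}\big)$, whose base formula lies in $\Lfull$ but not $\Lcond$. In $\mathfrak{M}_1$ the conjunct $[X=0]Y=0$ forces $U_Y = 0$ while $[X=1]Y=1$ forces $U_Y = 1$, so the conjunction holds on the empty set and $\mathfrak{M}_1 \models \varphi$. In $\mathfrak{M}_2$, however, both conjuncts hold exactly when $U_Y = 0$, an event of probability $1/2$, so $\mathfrak{M}_2 \notmodels \varphi$. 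Together with the level-2 agreement this establishes that $\mathcal{L}_3$ strictly separates models indistinguishable in $\mathcal{L}_2$.

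I expect the main obstacle to be the first task rather than the last: one must be genuinely exhaustive in confirming that \emph{every} interventional distribution agrees, since a single mismatch would already yield an $\mathcal{L}_2$ distinction and sink the example. The conceptual subtlety is that $\mathfrak{M}_2$ really does contain a causal dependence of $Y$ on $X$; what saves the argument is that this dependence is symmetric enough to wash out under every intervention, surfacing only in the joint, cross-world statement expressible at the third level.
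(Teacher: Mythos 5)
Your proposal is correct and follows essentially the same strategy as the paper: exhibit two models that induce identical interventional distributions (hence agree on all of $\mathcal{L}_2$) yet assign different values to the probability of necessity and sufficiency $\mathbb{P}([X=0]Y=0 \wedge [X=1]Y=1)$. The only difference is the concrete instantiation---your XOR-noise pair in place of the paper's example adapted from Avin et al.\ with a ternary exogenous variable---and your computations for both the level-2 agreement and the level-3 separation check out.
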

\begin{proof} Consider an example adapted from \cite{Avin} with two endogenous and two exogenous variables $X,U_X,Y,U_Y$, where $U_X \sim \textsf{Bernoulli}(0.5)$ and $X = U_X$, while $U_Y \sim \textsf{Unif}(0,1,2)$. The difference between models $\mathfrak{M}_1$ and $\mathfrak{M}_2$ is the function for binary variable $Y$. In $\mathfrak{M}_1$ we have $Y$ equal to $(X \leftrightarrow U_Y= 0)$, and in $\mathfrak{M}_2$ we have $Y$ given by $(X \rightarrow U_Y=0) \wedge (U_Y=2 \rightarrow X)$. It is then easy to check (by induction) that $\mathfrak{M}_1$ and $\mathfrak{M}_2$ validate all the same $\mathcal{L}_2$ formulas, whereas, e.g., $\semantics{\mathbf{p}}_{\mathfrak{M}_1} \neq \semantics{\mathbf{p}}_{\mathfrak{M}_2}$, with $\mathbf{p}$ the term denoting the probability of necessity and sufficiency $\mathbb{P}([X=0]Y=0 \wedge [X=1]Y=1)$.
\end{proof}
Note also that $\semantics{\mathbb{P}([\top]Y=1 \wedge [X=1]Y=1)}_{\mathfrak{M}_1} \neq \semantics{\mathbb{P}([\top]Y=1 \wedge [X=1]Y=1)}_{\mathfrak{M}_2}$ in this second example, showing that even allowing simple conjunctions of the form $\gamma \wedge [\alpha]\beta$ would increase the expressive power of $\mathcal{L}_2$. It is thus not possible in general to reason in $\mathcal{L}_2$ about  conditional expressions such as $\mathbb{P}([\alpha]\beta|\gamma)$. On the other hand $\mathcal{L}_2$ does handle \emph{conditional effects}, since, e.g., $\mathbb{P}([\alpha]\beta|[\alpha]\gamma) \geqslant \mathbf{t}$ can be rewritten as $\mathbb{P}\big([\alpha](\beta\wedge\gamma)\big) \geqslant \mathbf{t}\cdot \mathbb{P}([\alpha]\gamma)$.\footnote{In the notation of \emph{do}-calculus \cite{Pearl1995,Pearl2009}, the expression $\mathbb{P}([\alpha]\beta|[\alpha]\gamma)$ would be written as $\mathbb{P}(\beta \,|\, do(\alpha),\gamma)$.}

In a companion article we improve upon Props. \ref{l1l2} and \ref{l2l3} by showing that for $i<j$ the $\mathcal{L}_i$-theory of a model \emph{almost-never} (i.e., with measure zero) determines its $\mathcal{L}_j$-theory \cite[Thm. 1]{BCII}.

\subsection{Graph Definability and $Do$-Calculus}

Given the languages and interpretation considered so far, we mention as an aside that it may be enlightening to consider a notion of \emph{graph validity}, analogous to ``frame validity'' in modal logic \cite{vanBenthem2001}. Let us say $\mathcal{G} \models \varphi$ just in case $\mathfrak{M} \models \varphi$ for all Markov structures $\mathfrak{M}$ such that $\mathcal{G}=\mathcal{G}_{\mathfrak{M}}$.

For any dag $\mathcal{G}$ there is a probability distribution $P$ whose conditional independencies are exactly those implied by d-separation in $\mathcal{G}$ \cite{Geiger}. It is then easy to construct an SCM $\mathfrak{M}$ with $\mathcal{G}=\mathcal{G}_{\mathfrak{M}}$ and $P=P_{\mathfrak{M}}$, which immediately gives:\footnote{$\mathbb{P}(\mathbf{X} \wedge \mathbf{Y}| \mathbf{Z}) \equiv \mathbb{P}(\mathbf{X}|\mathbf{Z})\mathbb{P}(\mathbf{Y}|\mathbf{Z})$ represents a conjunction over all instances of this schema with all combinations of values $\mathbf{X} =\mathbf{x}$, $\mathbf{Y} =\mathbf{y}$, $\mathbf{Z} =\mathbf{z}$. Here $\mathbf{X}, \mathbf{Y}, \mathbf{Z}$ are lists of variables and $\mathbf{x}, \mathbf{y}, \mathbf{z}$ are corresponding instantiations.}
\begin{proposition} \label{dep} $\mathcal{G} \models \mathbb{P}(\mathbf{X} \wedge \mathbf{Y}| \mathbf{Z}) \equiv \mathbb{P}(\mathbf{X}|\mathbf{Z})\mathbb{P}(\mathbf{Y}|\mathbf{Z})$ if and only if $\big((\mathbf{X} \Perp \mathbf{Y})|\mathbf{Z}\big)_{\mathcal{G}}$. In other words, the graph property of d-separation is definable in $\Ll_1$. \qed
\end{proposition}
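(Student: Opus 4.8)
The plan is to reduce the displayed $\mathcal{L}_1$ formula to the measure-theoretic conditional-independence statement $\mathbf{X} \perp \mathbf{Y} \mid \mathbf{Z}$ holding in $P_{\mathfrak{M}}$, and then to invoke the two halves of the d-separation theorem: soundness (d-separation implies independence in every compatible Markov distribution) for one direction, and the existence of a faithful distribution for the other.

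First I would unfold the abbreviation. Recalling that the displayed formula denotes the conjunction, over every value-combination $\mathbf{x},\mathbf{y},\mathbf{z}$, of the instances $\mathbb{P}(\mathbf{X}{=}\mathbf{x} \wedge \mathbf{Y}{=}\mathbf{y} \mid \mathbf{Z}{=}\mathbf{z}) \equiv \mathbb{P}(\mathbf{X}{=}\mathbf{x} \mid \mathbf{Z}{=}\mathbf{z})\,\mathbb{P}(\mathbf{Y}{=}\mathbf{y} \mid \mathbf{Z}{=}\mathbf{z})$, I would clear the conditional-probability denominators exactly as in the explanation of $\mathcal{L}_1$ above, so that each instance becomes the polynomial identity $\mathbb{P}(\mathbf{x} \wedge \mathbf{y} \wedge \mathbf{z})\,\mathbb{P}(\mathbf{z}) \equiv \mathbb{P}(\mathbf{x} \wedge \mathbf{z})\,\mathbb{P}(\mathbf{y} \wedge \mathbf{z})$. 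This form holds automatically whenever $\mathbb{P}(\mathbf{z}) = 0$ and otherwise coincides with the usual factorization, so for a fixed model $\mathfrak{M} \models \varphi$ holds precisely when $P_{\mathfrak{M}}$ satisfies $\mathbf{X} \perp \mathbf{Y} \mid \mathbf{Z}$. This reduces the goal to showing that $\big((\mathbf{X} \Perp \mathbf{Y}) \mid \mathbf{Z}\big)_{\mathcal{G}}$ iff this independence holds in $P_{\mathfrak{M}}$ for every Markov $\mathfrak{M}$ with $\mathcal{G}_{\mathfrak{M}} = \mathcal{G}$.

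For the ``if'' direction I would appeal directly to the fact, noted above, that in Markov structures d-separation guarantees conditional independence: assuming $\big((\mathbf{X} \Perp \mathbf{Y}) \mid \mathbf{Z}\big)_{\mathcal{G}}$, every Markov $\mathfrak{M}$ with $\mathcal{G}_{\mathfrak{M}} = \mathcal{G}$ satisfies the independence and hence $\mathfrak{M} \models \varphi$, giving $\mathcal{G} \models \varphi$. For the ``only if'' direction I would argue contrapositively using the distribution supplied by \cite{Geiger}: if d-separation fails then $\mathbf{X}$ and $\mathbf{Y}$ are d-connected given $\mathbf{Z}$, and Geiger's construction yields a distribution $P$ over $\mathbf{V}$ whose conditional independencies are exactly the d-separations of $\mathcal{G}$, so that $P$ itself violates the independence in question. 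As remarked above, one then realizes $P$ by an SCM $\mathfrak{M}$ with $\mathcal{G}_{\mathfrak{M}} = \mathcal{G}$ and $P_{\mathfrak{M}} = P$---giving each $V$ a private exogenous source and letting $f_V$ draw $V$ from the conditional $P\big(V \mid \mathrm{parents}_{\mathcal{G}}(V)\big)$---so that $\mathfrak{M} \notmodels \varphi$ and therefore $\mathcal{G} \notmodels \varphi$.

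The hard part will be the converse, and specifically verifying that the realizing SCM induces $\mathcal{G}$ exactly as its influence graph $\mathcal{G}_{\mathfrak{M}}$: the mechanisms $f_V$ must be chosen so that each edge of $\mathcal{G}$ corresponds to a genuine direct influence in the sense of $\influences_{\mathfrak{M}}$, while faithfulness of $P$ ensures the d-connection persists as an actual probabilistic dependence rather than collapsing to a spurious independence. Arranging $\mathfrak{M}$ to be simultaneously Markov, matched to $\mathcal{G}$ on the nose, and to carry the faithful $P$ is the delicate point; once it is in place, the remaining bookkeeping---unfolding the conditional-probability shorthand and handling zero-probability conditioning events---is routine.
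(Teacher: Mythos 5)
Your proposal is correct and follows essentially the same route as the paper: the paper's entire argument is the one-sentence observation that d-separation yields independence in every Markov structure (soundness), while for the converse one takes Geiger's faithful distribution for $\mathcal{G}$ and realizes it by an SCM $\mathfrak{M}$ with $\mathcal{G}_{\mathfrak{M}}=\mathcal{G}$ and $P_{\mathfrak{M}}=P$. The point you flag as delicate---ensuring the realizing SCM's influence graph is exactly $\mathcal{G}$, which follows from faithfulness forcing each parent to be a genuine direct influence under the ``private exogenous source'' construction---is precisely what the paper dismisses as ``easy,'' so your account is, if anything, more explicit than the original.
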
 One of the most intriguing components of structural causal reasoning is the \emph{$do$-calculus} \cite{Pearl1995,Pearl2009,Zhang08}, allowing the derivation of causal effects from observational data. This calculus can also be seen as involving graph validity. The next proposition is a slight extension of what was already proved in \cite{Pearl1995}; see also \cite{BCII}. 
\begin{proposition}\label{docalc} Let $\mathcal{G}$ be a 
dag over variables $\mathbf{V}$.\footnote{$\mathcal{G}_{\overline{\mathbf{X}}\underline{\mathbf{Z}}}$ is $\mathcal{G}$ minus any edges into $\mathbf{X}$ or out of $\mathbf{Z}$, and $\mathbf{Z}(\mathbf{W})$ is the set of all $\mathbf{Z}$-nodes that are not ancestors of any $\mathbf{W}$-node in $\mathcal{G}_{\overline{\mathbf{X}}}$.}
Then
\begin{enumerate}
 \item $\mathcal{G} \models \mathbb{P}\big([\mathbf{X}]\mathbf{Y}|[\mathbf{X}](\mathbf{Z} \wedge \mathbf{W})\big)  \equiv   \mathbb{P}([\mathbf{X}]\mathbf{Y}|[\mathbf{X}]\mathbf{W})$ iff $\big((\mathbf{Y} \Perp \mathbf{Z}) |\mathbf{X},\mathbf{W}\big)_{\mathcal{G}_{\overline{\mathbf{X}}}}$;
 \item $\mathcal{G} \models \mathbb{P}([\mathbf{X} \wedge \mathbf{Z}]\mathbf{Y}|[\mathbf{X} \wedge \mathbf{Z}]\mathbf{W})  \equiv   \mathbb{P}\big([\mathbf{X}]\mathbf{Y}|[\mathbf{X}](\mathbf{Z} \wedge \mathbf{W})\big)$ iff $\big((\mathbf{Y} \Perp \mathbf{Z}) |\mathbf{X},\mathbf{W}\big)_{\mathcal{G}_{\overline{\mathbf{X}}\underline{\mathbf{Z}}}}$;
 \item $\mathcal{G} \models \mathbb{P}([\mathbf{X} \wedge \mathbf{Z}]\mathbf{Y}|[\mathbf{X}\wedge \mathbf{Z}]\mathbf{W})  \equiv  \mathbb{P}([\mathbf{X}]\mathbf{Y}|[\mathbf{X}]\mathbf{W})$ iff $\big((\mathbf{Y} \Perp \mathbf{Z}) |\mathbf{X},\mathbf{W}\big)_{\mathcal{G}_{\overline{\mathbf{X}} ,\overline{\mathbf{Z}(\mathbf{W})}}}$.
\end{enumerate}
All formulas here are in $\Ll_2$.
\qed
\end{proposition}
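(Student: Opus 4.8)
The plan is to reduce each of the three equivalences to a statement about conditional independence in a post-intervention distribution, and then invoke the same d-separation-implies-independence machinery used for Prop.~\ref{dep}. First I would record the basic semantic fact that for any Markov $\mathfrak{M}$ with $\mathcal{G}_\mathfrak{M}=\mathcal{G}$ and any finite intervention $i$, the mutilated model $i(\mathfrak{M})$ is again Markov, now with graph $\mathcal{G}_{\overline{\mathrm{dom}(i)}}$: replacing each targeted $f_V$ by a constant deletes exactly the edges into $\mathrm{dom}(i)$ while leaving every other mechanism, and hence every other direct-influence edge, untouched. Unwinding the abbreviation $\mathbb{P}([\alpha]\beta\,|\,[\alpha]\gamma)$ as a ratio in $P_{i_\alpha(\mathfrak{M})}$, each side of an equivalence then becomes an ordinary conditional probability in one of the distributions $P_{i_\mathbf{X}(\mathfrak{M})}$ or $P_{i_{\mathbf{X}\wedge\mathbf{Z}}(\mathfrak{M})}$, whose graphs are $\mathcal{G}_{\overline{\mathbf{X}}}$ and $\mathcal{G}_{\overline{\mathbf{X}}\,\overline{\mathbf{Z}}}$ respectively, with the intervened-on values playing the role of implicit conditioning events.

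For the soundness (``if'') direction I would follow \cite{Pearl1995}. Rule (1) is the most direct: the displayed equivalence, read over all instantiations as in the footnote to Prop.~\ref{dep}, asserts exactly $\mathbf{Y}\Perp\mathbf{Z}\mid\mathbf{W}$ in $P_{i_\mathbf{X}(\mathfrak{M})}$, and since this distribution is Markov over $\mathcal{G}_{\overline{\mathbf{X}}}$, the hypothesis $\big((\mathbf{Y}\Perp\mathbf{Z})\mid\mathbf{X},\mathbf{W}\big)_{\mathcal{G}_{\overline{\mathbf{X}}}}$ yields it immediately. Rules (2) and (3) compare two interventions, so here I would use Pearl's graph-surgery argument: passing from observing $\mathbf{Z}$ to intervening on $\mathbf{Z}$ deletes $\mathbf{Z}$'s incoming edges, and the two conditional distributions of $\mathbf{Y}$ must then agree whenever every path through which that surgery could matter is blocked by $\mathbf{X},\mathbf{W}$ --- the backdoor paths from $\mathbf{Z}$, captured by cutting $\mathbf{Z}$'s outgoing edges in $\mathcal{G}_{\overline{\mathbf{X}}\,\underline{\mathbf{Z}}}$ for (2), and captured after additionally severing the mechanisms of the non-ancestors $\mathbf{Z}(\mathbf{W})$ in $\mathcal{G}_{\overline{\mathbf{X}},\,\overline{\mathbf{Z}(\mathbf{W})}}$ for (3).

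For the completeness (``only if'') direction --- the slight extension over \cite{Pearl1995} --- I would argue contrapositively by exhibiting a single Markov structure realizing $\mathcal{G}$ that is \emph{faithful}, meaning no conditional independence holds in its interventional distributions beyond those forced by d-separation. Concretely I would take generic categorical mechanisms over the finite value sets: by the reasoning behind Prop.~\ref{dep} and \cite{Geiger}, a generic parametrization is faithful to $\mathcal{G}$, realizes every edge of $\mathcal{G}$ as a genuine direct influence (so that $\mathcal{G}_\mathfrak{M}=\mathcal{G}$ rather than a proper subgraph), and --- because each mutilated graph is obtained from $\mathcal{G}$ merely by deleting edges --- is simultaneously faithful to $\mathcal{G}_{\overline{\mathbf{X}}}$ and $\mathcal{G}_{\overline{\mathbf{X}}\,\overline{\mathbf{Z}}}$. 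If the stated d-separation fails, the corresponding conditional dependence then holds in the relevant interventional distribution, the numerical equality is violated, and hence $\mathcal{G}\notmodels\varphi$.

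The step I expect to be the main obstacle is the completeness direction for rules (2) and (3). Unlike rule (1), a failure of the equivalence there is not literally a single conditional dependence but a discrepancy between two distinct interventional distributions, so I must verify that an unblocked path in $\mathcal{G}_{\overline{\mathbf{X}}\,\underline{\mathbf{Z}}}$ (respectively $\mathcal{G}_{\overline{\mathbf{X}},\,\overline{\mathbf{Z}(\mathbf{W})}}$) genuinely forces a detectable numerical gap between the two sides under a faithful parametrization, rather than merely a potential one. Relatedly, I would need the set of parameters that simultaneously realize $\mathcal{G}_\mathfrak{M}=\mathcal{G}$ and enforce joint faithfulness across all the relevant mutilated graphs to be nonempty; this holds because each condition excludes only a proper algebraic subvariety, and there are finitely many such conditions for the fixed $\mathbf{X},\mathbf{Y},\mathbf{Z},\mathbf{W}$, so their intersection has full measure.
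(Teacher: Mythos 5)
The paper itself gives no proof of Prop.~\ref{docalc}: it is stated with \qed as ``a slight extension of what was already proved in \cite{Pearl1995}'', with the extension (the necessity of the graphical conditions) deferred to \cite{BCII}. So there is no in-paper argument to match yours against; what I can assess is whether your reconstruction actually closes the proposition. The soundness (``if'') direction is fine and is exactly Pearl's argument: the mutilated model is Markov over the mutilated graph, rule~(1) reduces to an ordinary conditional independence in $P_{i_{\mathbf{X}}(\mathfrak{M})}$, and rules~(2) and~(3) follow by the standard graph-surgery reasoning. Your observation that rule~(1)'s converse follows from a parametrization of $\mathcal{G}$ that is faithful (in the interventional sense) is also correct, and the reduction of interventional faithfulness to genericity of the observational conditional probability tables via truncated factorization is sound in the Markovian setting.

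The genuine gap is in the ``only if'' direction for rules~(2) and~(3), and you have located it yourself but not closed it. For those rules a failure of the displayed equivalence is not a conditional dependence in any single distribution; it is a numerical discrepancy between $P_{i_{\mathbf{X}\wedge\mathbf{Z}}(\mathfrak{M})}$ and a conditional of $P_{i_{\mathbf{X}}(\mathfrak{M})}$. Your closing argument --- ``each condition excludes only a proper algebraic subvariety, so their intersection has full measure'' --- is circular at exactly this point: the measure-theoretic step requires knowing that the variety of parameters on which the rule-(2) (resp.\ rule-(3)) identity holds for all instantiations is a \emph{proper} subvariety, and properness is precisely the claim that some witness model violates the identity when the d-separation fails. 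Establishing that witness is the entire content of the ``slight extension''; it requires an explicit construction that turns an active path in $\mathcal{G}_{\overline{\mathbf{X}}\underline{\mathbf{Z}}}$ (resp.\ $\mathcal{G}_{\overline{\mathbf{X}},\overline{\mathbf{Z}(\mathbf{W})}}$) into an actual inequality of the two interventional quantities --- e.g.\ by routing nearly deterministic signals along the active path and neutralizing the rest of the graph, in the style of the do-calculus completeness constructions of Huang--Valtorta and Shpitser--Pearl that the paper cites. Without that construction the converse direction of items~(2) and~(3) is asserted rather than proved. A smaller point: you should also check that the witness can be chosen with $\mathcal{G}_{\mathfrak{M}}=\mathcal{G}$ exactly \emph{and} with $P(\mathbf{u})>0$ everywhere so that the conditionals in the statement are well defined for the instantiations at which the two sides are claimed to differ.
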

We leave further exploration of questions about graph definability in these languages for a future occasion.
%
%
%
\section{Axiomatizations}
We now give systems $\AX_i$
each of which axiomatizes the validities of $\Ll_i$ over both $\mathcal{M}$ and $\mathcal{M}^*$.
{Note that since $\Ll_1 \subset \Ll_2 \subset \Ll_3$, the full system
$\AX_3$ completely axiomatizes all three languages.
Its axioms lie in $\Ll_3$, however, while our $\AX_2$ and $\AX_1$ include only principles expressible in the respective lower-level languages.}

These probabilistic logics build on the base (deterministic) logics, as any equivalent base formulas must be assigned the same probability.
We call $\epsilon \in \Lfull = \Lbasethree$ an \emph{$\Lfull$ validity}, and write $\models \epsilon$, if for all $\mathcal{F}$ and $\mathbf{u}$ we have $\mathcal{F},\mathbf{u} \models \epsilon$.
Equivalently, $\models \epsilon$ if $\mathcal{F} \models \epsilon$ for all $\mathcal{F}$ that are deterministic, i.e., lack exogenous dependence.

The $\Lfull$ validities for Boolean ranges
have been axiomatized in \cite{II2019}, and the corresponding satisfiability problem is $\mathsf{NP}$-complete.
We need to add one more axiom schema,\footnote{In \cite{II2019}, $\textsf{Def}$ merely amounts to the law of excluded middle since $\Val(X) = \{0, 1\}$ for all $X \in \mathbf{V}$; here we assume only that every $\Val(X)$ is finite.} for every $V \in \mathbf{V}$; here the dummy variables $v, v'$ range over $\Val(V)$:
\begin{eqnarray*}
  \textsf{Def}. && \bigwedge_{\substack{v \neq v'}} \neg [\alpha] \left(V=v \wedge V=v'\right) \wedge \bigvee_{v} [\alpha] (V=v).
\end{eqnarray*}
The first part of \textsf{Def} reflects the uniqueness, and the second part the existence, of solutions under any causal intervention, as ensured by recursiveness (Def.~\ref{def:recursivescm}).

\subsection{$\AX_3$}
The inference rule and first $4$ axioms capture propositional and probabilistic reasoning:
\begin{eqnarray*}
  \textsf{MP}. && \text{Inference rule: } \varphi, \varphi \rightarrow \psi \vdash \psi \\
  \textsf{Bool}. && \text{Boolean tautologies over } \Ll_3 \\
  \textsf{NonNeg}. && \mathbb{P}(\epsilon) \geqslant \underline{0} \\
  \textsf{Add}. && \mathbb{P}(\epsilon \land \zeta) + \mathbb{P}(\epsilon \land \lnot \zeta) \equiv \mathbb{P}(\epsilon) \\
  \textsf{Dist}. && \mathbb{P}(\epsilon) \equiv \mathbb{P}(\zeta) \mbox{ whenever } \models \epsilon \leftrightarrow \zeta.
\end{eqnarray*}

A single schema, $\textsf{ProbRec}$, encompasses the interaction between the probability and conditional modalities. It captures the acyclicity of (indirect) causal influence in recursive SCMs (cf. Prop.~\ref{prop:semimarkovisrecursive}).
In $\textsf{ProbRec}$ instances,
$X_1 \neq X_n$ and $x_1 \neq x'_1, \dots, x_n \neq x'_n$.
Abbreviating, e.g., $X_i = x_i$ as $x_i$:
\begin{eqnarray*}
   \textsf{ProbRec}. && 
 \bigwedge_{i=1}^{n-1}\mathbb{P}\big([ \alpha_i \land x_i ] x^*_{i+1} \land [ \alpha_i \wedge  x_i' ] \lnot x^*_{i+1}\big) \not\equiv \0 \\
&&
  \rightarrow
  \mathbb{P}\big(
  [ \alpha_{n} \land x_n ] x^*_1 \land [ \alpha_{n} \wedge x'_n ] \lnot x^*_1
  \big) \equiv \0.
\end{eqnarray*}

Finally, the following 16 axioms, collectively called $\textsf{Poly}$,
constitute a polynomial calculus:
\begin{eqnarray*}
  \textsf{OrdTot}. && \mathbf{t}_1 \geqslant \mathbf{t}_2 \lor \mathbf{t}_2 \geqslant \mathbf{t}_1 \\
  \textsf{OrdTrans}. && \mathbf{t}_1 \geqslant \mathbf{t}_2 \land \mathbf{t}_2 \geqslant \mathbf{t}_3 \rightarrow \mathbf{t}_1 \geqslant \mathbf{t}_3 \\
  \textsf{NonDegen}. && \lnot\left(\underline{0} \equiv \underline{1}\right) \\
  \textsf{AddComm}. && \mathbf{t}_1 + \mathbf{t}_2 \equiv \mathbf{t}_2 + \mathbf{t}_1 \\
  \textsf{AddAssoc}. && (\mathbf{t}_1 + \mathbf{t}_2) + \mathbf{t}_3 \equiv \mathbf{t}_1 + (\mathbf{t}_2 + \mathbf{t}_3) \\ 
  \textsf{Zero}. && \mathbf{t} + \underline{0} \equiv \mathbf{t} \\
  \textsf{AddOrd}. && \mathbf{t}_1 \geqslant \mathbf{t}_2 \rightarrow \mathbf{t}_1 + \mathbf{t}_3 \geqslant \mathbf{t}_2 + \mathbf{t}_3 \\
  \textsf{MulOrdG}. && \mathbf{t}_1 \geqslant \mathbf{t}_2 \land \mathbf{t}_3 \geqslant \0 \rightarrow \mathbf{t}_1 \cdot \mathbf{t}_3 \geqslant \mathbf{t}_2 \cdot \mathbf{t}_3 \\
  \textsf{MulOrdL}. && \mathbf{t}_1 \geqslant \mathbf{t}_2 \land \mathbf{t}_3 \leqslant \0 \rightarrow \mathbf{t}_1 \cdot \mathbf{t}_3 \leqslant \mathbf{t}_2 \cdot \mathbf{t}_3 \\
  \textsf{MulComm}. && \mathbf{t}_1 \cdot \mathbf{t}_2 \equiv \mathbf{t}_2 \cdot \mathbf{t}_1 \\
  \textsf{MulAssoc}. && (\mathbf{t}_1 \cdot \mathbf{t}_2) \cdot \mathbf{t}_3 \equiv \mathbf{t}_1 \cdot (\mathbf{t}_2 \cdot \mathbf{t}_3) \\
  \textsf{One}. && \mathbf{t} \cdot \underline{1} \equiv \mathbf{t} \\
  \textsf{MulDist}. && \mathbf{t}_1 \cdot (\mathbf{t}_2 + \mathbf{t}_3) \equiv \mathbf{t}_1 \cdot \mathbf{t}_2 + \mathbf{t}_1 \cdot \mathbf{t}_3 \\
  \textsf{ZeroMul}. && \tT \cdot \0 \equiv \0 \\
  \textsf{NoZeroDiv}. && \tT_1 \cdot \tT_2 \equiv \0 \rightarrow \tT_1 \equiv \0 \lor \tT_2 \equiv \0 \\
  \textsf{Neg}. && \mathbf{t} + (- \mathbf{t}) \equiv \underline{0}.
\end{eqnarray*}

\subsection{$\AX_2$}
As for $\AX_2$, note that
$\textsf{Add}$ and $\textsf{ProbRec}$ do not belong to $\Ll_2$ since
$\Lbase_2$ does not close under conjunction.
To obtain $\AX_2$, form the axiomatization as above, but replace $\textsf{Add}$ and $\textsf{ProbRec}$ with the variants below.
There is one instance of $\textsf{ProbRec2}$ for every \emph{finite} $\mathbf{W} \subset \mathbf{V}$.
$\mathbf{Y} \prec \mathbf{Z}$ means that $Y \prec Z$ for every $Y \in \mathbf{Y}, Z \in \mathbf{Z}$.
\begin{eqnarray*}
\textsf{Add2}. && \hspace{-0.2cm}\mathbb{P}\big([\alpha](\beta \land \gamma)\big) + \mathbb{P}\big([\alpha](\beta \land \lnot \gamma)\big) \equiv \mathbb{P}\big([\alpha]\beta\big) \\
\textsf{ProbRec2}. && \bigvee_{\substack{\prec \text{ order} \\ \text{on } \mathbf{W}}} \; \bigwedge_{\substack{\mathbf{X}, \mathbf{Y}, \mathbf{Z} \subseteq \mathbf{W} \\ \mathbf{X} \cap \mathbf{Y} = \varnothing \\ \mathbf{Y} \prec \mathbf{Z} \\ \mathbf{x}, \mathbf{y}, \mathbf{z}}} \mathbb{P}\big([\mathbf{x}]\mathbf{y}\big) \equiv \mathbb{P}\big([\mathbf{x} \land \mathbf{z}]\mathbf{y}\big).
\end{eqnarray*}

In exceptional cases, a probability at the third causal level reduces to the second level.
We need the following principle to capture their nonnegativity:
\begin{eqnarray*}
\textsf{IncExc}. && \hspace{-0.56cm}\bigwedge_{\substack{\mathbf{Y} \subseteq \mathbf{W}}} \, \sum_{\mathbf{X} \subseteq \mathbf{W} \setminus \mathbf{Y}} (-1)^{\left|\mathbf{X}\right|} \mathbb{P}\left(\big[\mathbf{W} \setminus (\mathbf{X} \cup \mathbf{Y})\big] \mathbf{W}\right) \geqslant \0.
\end{eqnarray*}


\subsection{$\AX_1$}
Conditionals in $\Ll_1$ are trivial, so $\textsf{ProbRec}$ becomes irrelevant.
Consequently $\AX_1$ consists simply of the $5$ schemata $\textsf{MP}$, $\textsf{Bool}$, $\textsf{NonNeg}$, $\textsf{Add}$, $\textsf{Dist}$ with metavariables confined to $\Lbase_1 = \Lprop$, along with $\textsf{Poly}$.

\subsection{Sample Derivation}
Before proving completeness (Thm. \ref{thm:soundcomplete}) we illustrate the power of $\AX_3$ through a representative derivation. Our goal is to derive the example in \cite[\S3.2]{Pearl1995}: \begin{eqnarray} \mathbb{P}\big([x^*]y^*\big) & \equiv & \sum_{z} \mathbb{P}(z|x^*) \sum_{x}\mathbb{P}(y^*|x\wedge z)\mathbb{P}(x) \label{do}\end{eqnarray} 
This formula (in $\Ll_2$) is not in general valid. But it does follow from further assumptions easily statable in $\Ll_2$. Formulas (\ref{second})--(\ref{secondsecond}) below are instances of the second $do$-calculus schema, while (\ref{third}) and (\ref{fourth}) are instances of the third schema. \begin{eqnarray}
  \mathbb{P}\big([X]Z\big) & \equiv & \mathbb{P}(Z|X)\label{second} \\
   \mathbb{P}\big([X]Y|[X]Z\big)   &\equiv&  \mathbb{P}\big([X\wedge Z]Y\big) \label{first}\\
  \mathbb{P}\big([Z]Y|[Z]X\big) & \equiv & \mathbb{P}(Y|X \wedge Z) \label{secondsecond} \\
  \mathbb{P}\big([X \wedge Z]Y\big) & \equiv & \mathbb{P}\big([Z]Y\big) \label{third}\\
  \mathbb{P}\big([Z]X\big) & \equiv &  \mathbb{P}(X)\label{fourth}
\end{eqnarray}
Prop. \ref{docalc} provides the graphical assumptions needed to justify each of these assertions. For example, they are all valid over the graph in Fig. \ref{bow} \cite{Pearl1995,Pearl2009}.
\begin{figure}
\begin{center}
\begin{tikzpicture}
  \node (s0) at (0,0) {$X$};

  \node (s2) at (1,0) {$Z$};

  \node (s3) at (2,0)  {$Y$};

  \node (s4) at (1,.75) {$W$};

  \path (s0) edge[thick,->] (s2);

  \path (s2) edge[thick,->] (s3);

    \path (s4) edge[thick,->] (s0);

  \path (s4) edge[thick,->] (s3);


 \end{tikzpicture} \caption{A graph over which (\ref{second})-(\ref{fourth}) are all valid.}\label{bow} \end{center}
 \end{figure}
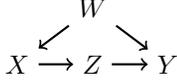
We now argue that $\big((\ref{second})\wedge(\ref{first}) \wedge (\ref{secondsecond}) \wedge (\ref{third}) \wedge (\ref{fourth})\big) \rightarrow (\ref{do})$ is derivable in our calculus.\footnote{It is worth observing that this derivation would go through even if we considered the weaker logic for $\Ll_{\textnormal{full}}$ studied in \cite{II2018}. That is, deriving $\big((\ref{second})\wedge(\ref{first}) \wedge (\ref{secondsecond}) \wedge (\ref{third}) \wedge (\ref{fourth})\big) \rightarrow (\ref{do})$ does not depend on any of the causal axioms that characterize structural causal models \cite{Halpern2000,Pearl2009}. However, slightly weaker assumptions---e.g., $\mathbb{P}([X]Z) \equiv \mathbb{P}([X]Z|X)$ in place of (\ref{second})---would require the additional axioms.}
For simplicity we derive this from $\AX_3$ in $\Ll_3$. 
First, by appeal to \textsf{MP}, \textsf{Bool}, and \textsf{Dist}, we have that $\mathbb{P}([x^*]y^*) \equiv \mathbb{P}\big(\bigvee_{z}([x^*]y^* \wedge [x^*]z)\big)$, which in turn using $\textsf{Add}$ is equal to $\sum_{z}\mathbb{P}([x^*]y^* \wedge [x^*]z)$. By $\Poly$ this can be shown equal to $\sum_{z}\mathbb{P}([x^*]z)\mathbb{P}([x^*]y^* | [x^*]z)$. By  (\ref{second}) and (\ref{first}) this is equal to $\sum_{z}\mathbb{P}(z|x^*)\mathbb{P}([x^*\wedge z]y^*)$, and by (\ref{third}) to $\sum_{z}\mathbb{P}(z|x^*)\mathbb{P}([z]y^*)$. Employing a similar argument to that above (using \textsf{MP}, \textsf{Bool}, $\textsf{Add}$, and \textsf{Dist}), this is equal to $\sum_{z}\mathbb{P}(z|x^*)\sum_{x}\mathbb{P}([z]y^*|[z]x)\mathbb{P}([z]x)$. By (\ref{secondsecond}) and (\ref{fourth}) we finally obtain $\sum_{z} \mathbb{P}(z|x^*) \sum_{x}\mathbb{P}(y^*|x\wedge z)\mathbb{P}(x)$.


\subsection{Completeness Theorems}
\begin{theorem}
\label{thm:soundcomplete}
Each $\AX_i$ is sound and complete for the validities of $\Ll_i$ with respect to both $\mathcal{M}^*$ and $\mathcal{M}$.
\end{theorem}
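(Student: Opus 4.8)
The plan is to treat soundness and completeness separately and to exploit the inclusion $\mathcal{M}^* \subseteq \mathcal{M}$, so that one argument serves both classes: I establish soundness over the larger class $\mathcal{M}$ (whence it descends to $\mathcal{M}^*$) and completeness by satisfying each consistent formula already inside the smaller class $\mathcal{M}^*$ (whence it lifts to $\mathcal{M}$). Soundness is routine: \MP, \Bool, \NonNeg, \Add, \Dist and the sixteen \Poly schemata are immediate from the semantics of $\semantics{\cdot}_{\mathfrak{M}}$; \textsf{Def} restates existence and uniqueness of solutions under interventions, guaranteed by recursiveness (Def.~\ref{def:recursivescm}); \textsf{ProbRec} and \textsf{ProbRec2} hold because in a recursive model the direct-influence relation $\influences_{\mathfrak{M}}$ is well-founded (Prop.~\ref{prop:semimarkovisrecursive}), forbidding cycles of positive-probability influence; and \textsf{IncExc} holds because its alternating sum computes the probability of a joint ($\Ll_3$-level) counterfactual event, which is nonnegative. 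I therefore focus on completeness in contrapositive form: every $\AX_i$-consistent $\varphi \in \Ll_i$ is satisfied by some $\mathfrak{M} \in \mathcal{M}^*$.

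The first step rewrites $\varphi$ as a constraint over finitely many real unknowns. Only finitely many variables $\mathbf{V}_\varphi$, values and interventions occur in $\varphi$, and I introduce an unknown for each consistent ``atom'': for $\Ll_1$, an assignment to $\mathbf{V}_\varphi$ with unknown $\mathbb{P}(s)$; for $\Ll_3$, a complete specification of the counterfactual value of each relevant variable under each relevant intervention, deemed consistent when it agrees with the $\Lfull$-validities of \cite{II2019} and \textsf{Def}; and for $\Ll_2$, the interventional probabilities $\mathbb{P}([\alpha]\mathbf{w})$ themselves, forming one simplex per relevant $\alpha$. Using \Add (resp.\ \textsf{Add2}), \Dist and \textsf{Def} one derives $\mathbb{P}(\epsilon) \equiv \sum_{s} p_s$ summed over the atoms entailing $\epsilon$, together with $\sum_s p_s \equiv \1$; substituting these into $\varphi$ yields a Boolean combination $\Phi$ of polynomial (in)equalities in the $p_s$. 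The causal content is then adjoined to $\Phi$ as the constraints of \textsf{ProbRec} (at level $3$) or of \textsf{ProbRec2} and \textsf{IncExc} (at level $2$), which are precisely the conditions for the vector $(p_s)$ to arise from a single recursive model; at the second level these match the realizability characterization of \cite{TianKP06}.

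The core of the proof is the equivalence between $\AX_i$-consistency of $\varphi$ and real solvability of $\Phi$ on the simplex $\{p_s \geqslant 0,\ \sum_s p_s = 1\}$. I argue the nontrivial direction contrapositively. If $\Phi$ has no real solution, then, after passing to disjunctive normal form, each disjunct---a conjunction of weak and strict polynomial inequalities---is infeasible over $\mathbb{R}$, and by the Positivstellensatz such infeasibility is witnessed by an algebraic identity that writes a contradiction as a combination of the constraint polynomials with sums-of-squares coefficients. Every such certificate is verifiable inside \Poly: the underlying polynomial identity follows from \AddComm through \MulDist; each square is nonnegative (from \OrdTot with \textsf{MulOrdG}/\textsf{MulOrdL}); and sums and products of nonnegatives stay nonnegative (\AddOrd, \textsf{MulOrdG}). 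The certificate thus derives $\0 \geqslant \1$, contradicting \textsf{NonDegen}. Propositional reasoning (\Bool, \MP) then recombines the per-disjunct refutations into a refutation of $\Phi$, hence of $\varphi$. This Positivstellensatz step is exactly what lifts the linear, Fourier--Motzkin-style reasoning of \cite{Fagin} to genuinely polynomial terms, and is the technical crux of the theorem.

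Finally, from a real solution $(p_s)$ I construct a witnessing model. Because solvability is governed by the existential theory of a real closed field, decidable by \cite{Canny}, the solution may be chosen with real-algebraic---hence computable---coordinates, and the number of atoms is at most exponential in $|\varphi|$, which yields the small-model property underlying Thm.~\ref{thm:complexity}. For $\Ll_3$, the \textsf{ProbRec} constraints render the aggregate positive-probability influence relation acyclic, so a topological order $\prec$ exists; taking the exogenous space to be the finitely many positive-probability atoms with measure $p_s$ and defining $\mathcal{F}$ to realize each atom gives a model recursive under $\prec$, since each positive-probability direct influence is a $\prec$-edge. For $\Ll_2$ one reconstructs an SCM realizing the prescribed interventional family by the \cite{TianKP06} construction, and for $\Ll_1$ one simply realizes a joint distribution. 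In each case the state space is finite and the probabilities computable, so $\mathfrak{M} \in \mathcal{M}^* \subseteq \mathcal{M}$. I expect the main obstacle to be the second level: since $\Lcond$ does not close under conjunction, joint counterfactuals are inexpressible, so recursiveness and nonnegativity must be captured entirely through \textsf{ProbRec2} and \textsf{IncExc} and shown to coincide exactly with \cite{TianKP06}-realizability---a tighter correspondence than the more permissive $\Ll_3$ construction.
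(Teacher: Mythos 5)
Your proposal is correct and follows essentially the same route as the paper: reduction to atoms via \textsf{Add}/\textsf{Add2}, \textsf{Dist}, and \textsf{Def} (the paper's Lem.~\ref{lem:sumform}--\ref{lem:normalform2}), refutation of unsolvable polynomial systems via the Positivstellensatz with the certificate verified inside \textsf{Poly}, model construction from a computable algebraic solution using \textsf{ProbRec}-induced acyclicity (Lem.~\ref{lem:metamodel}), and the appeal to the characterization of \cite{TianKP06} at the second level. The only (inconsequential) slip is attributing the small-model property to the exponential count of atoms, whereas it actually comes from a separate linear-algebra argument in Lem.~\ref{lem:smallmodel}.
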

\begin{proof}
We first prove the result for $\AX_3$. An analogous argument shows $\AX_1$-completeness.
Then we do $\AX_2$.
\subsubsection{$\AX_3$, $\AX_1$}
Soundness is straightforward.
Completeness: show any consistent $\varphi \in \Ll_3$ is satisfiable.
We work toward a normal form (Lem.~\ref{lem:normalform}).
Let $\mathbf{V}_\varphi \subset \mathbf{V}$ be the finite set of variables appearing in $\varphi$,
let $\mathcal{L}_{\mathrm{full}(\varphi)}$ be the fragment of $\Lfull$ in which only variables from $\mathbf{V}_\varphi$ appear,
and let $\mathcal{L}_{\mathrm{int}(\varphi)} = \Lint \cap \Lfullphi$; the latter language is finite. 
Let $\Delta = \big\{ \bigwedge_{\alpha \in \Lintphi} [\alpha] \mathbf{v}_\varphi^\alpha \;\ssep\; \mathbf{v}_\varphi^\alpha \in \Val(\Vphi) \text{ for each } \alpha \big\}$.\footnote{Or, $\Delta = \big\{\bigwedge_{\alpha}[\alpha] f(\alpha) : f \in \Lintphi \to \Val(\Vphi) \big\}$.}
Distinct elements of $\Delta$ are jointly $\Lfull$-unsatisfiable by $\textsf{Def}$.
Let $\Delta_{\mathrm{sat}} = \{ \delta \in \Delta \ssep \centernot{\models}\delta \to \bot\}$.
A preliminary result (where $\vdash_3$ is $\AX_3$ proof):
\begin{lemma}\label{lem:sumform}
Let $\epsilon \in \Lfullphi$. Then
$\vdash_3 \mathbb{P}(\epsilon) \equiv \sum_{\substack{\delta \in \Delta \\ \delta \models \epsilon}} \mathbb{P}(\delta)$.
\end{lemma}
\begin{proof}
We show that $\models \epsilon \leftrightarrow \bigvee_{\delta \models \epsilon} \delta$.
Since $\models \lnot[\alpha]\beta \leftrightarrow [\alpha]\lnot \beta$ and $\models [\alpha](\beta\land\gamma) \leftrightarrow [\alpha]\beta \land [\alpha]\gamma$, we have $\models \epsilon \leftrightarrow [\alpha_1]\beta_1 \land \dots \land [\alpha_n]\beta_n$ with distinct $\alpha_i$'s; since we can conjoin any $[\alpha]\top$, we can suppose every $\alpha \in \Lintphi$ appears.
By $\textsf{Def}$, $\models [\alpha]\beta \leftrightarrow [\alpha]\bigvee_{\mathbf{v}_\varphi \models \beta} \mathbf{v}_\varphi$\footnote{$\mathbf{v}_\varphi \models \beta$, for $\mathbf{v}_\varphi \in \Val(\Vphi)$, means $\models [\top](\mathbf{v}_\varphi \rightarrow \beta)$.} so
\begin{align}\label{eqn:epsdisjunction}
\models \epsilon \leftrightarrow \bigwedge_{i} [\alpha_i]\bigvee_{\mathbf{v}_\varphi \models \beta_i}\mathbf{v}_\varphi
\leftrightarrow \bigvee_{\substack{\mathbf{v}^1_\varphi \models \beta_1 \\ \dots \\ \mathbf{v}^n_\varphi \models \beta_n}} \bigwedge_i [\alpha_i] \mathbf{v}_\varphi^i.
\end{align}

\eqref{eqn:epsdisjunction} disjoins elements of $\Delta$, and any $\delta \notin \Delta_{\cons}$ may be freely appended as a disjunct.
For any $\delta$ appearing in \eqref{eqn:epsdisjunction}, clearly $\delta \models \epsilon$.
Conversely, suppose $\delta \models \epsilon$ but $\delta = \bigwedge_i [\alpha_i] \mathbf{v}_\varphi^i$ does not appear in \eqref{eqn:epsdisjunction}.
Then $\delta \in \Delta_{\cons}$ and $\mathbf{v}_\varphi^{i} \notmodels \beta_i$ for some $i$.
Thus there are $\mathcal{F}$ such that $\mathcal{F} \models [\alpha_i] \mathbf{v}_\varphi^i$ but $\mathcal{F}\notmodels [\alpha_i] \beta_i$, so $\mathcal{F}\notmodels \epsilon$, a contradiction.
Finally, sum over mutually exclusive events.
\end{proof}
There is a consistent clause in the disjunctive normal form of $\varphi$ so we assume that $\varphi \in \Ll_3$ is a conjunction of literals (using $\textsf{MP}$, $\textsf{Bool}$) and apply the following:
\begin{lemma}
  \label{lem:normalform}
Let $\varphi$ be a conj. of lit.
There are polynomial terms 
$\{\mathbf{t}_i, \mathbf{t}'_{i'}\}_{\substack{i, i'}}$
in the variables $\big\{\mathbb{P}(\delta)\big\}_{\delta \in \Delta}$
such that 
      \begin{multline}\label{eqn:normalform}
  \vdash_3 \varphi\leftrightarrow\bigvee_{\Delta' \subseteq \Delta}\Bigg[
    \bigwedge_{\delta \in \Delta'} \mathbb{P}(\delta) > \0 \land \bigwedge_{\delta \notin \Delta'} \mathbb{P}(\delta) \equiv \0 \\
    \land \sum_{\delta \in \Delta} \mathbb{P}\left(\delta\right) \equiv \underline{1}
      \land \bigwedge_{i}
        \mathbf{t}_i \geqslant \underline{0}
       \land
      \bigwedge_{i'}
        \mathbf{t}'_{i'} > \underline{0} \Bigg].
      \end{multline}
\end{lemma}
\begin{proof}
The disjunction in \eqref{eqn:normalform} and the first three conjuncts within each disjunct follow easily by propositional reasoning, $\textsf{Dist}$ (including $\textsf{Def}$), $\textsf{NonNeg}$, and $\textsf{Add}$.
To obtain the last two conjuncts, since $\varphi$ is a conjunction of $\Ll_3$ literals 
suppose $\varphi = \bigwedge_{i} \mathbf{f}_i \geqslant \0 \land \bigwedge_{i'} \lnot(\mathbf{f}'_{i'} \geqslant \0)$.
By $\textsf{OrdTot}$, $\textsf{AddOrd}$, $\textsf{OrdTrans}$, $\textsf{AddComm}$, $\textsf{Neg}$, $\textsf{Zero}$,
$\varphi \leftrightarrow \bigwedge_{i} \mathbf{f}_i \geqslant \0 \land \bigwedge_{i'} (-\mathbf{f}'_{i'} > \0)$.
We claim each $\mathbf{f}_i, -\mathbf{f}'_{i'}$ has an equivalent polynomial in the variables $\big\{\mathbb{P}(\delta)\big\}_{\delta \in \Delta}$. This gives the $\{\mathbf{t}_i, \mathbf{t}'_{i'}\}_{i, i'}$.
It can be easily shown by induction on polynomial terms
that $\textsf{Poly}$ allows replacement of equivalents.\footnote{That is, that if $\vdash \mathbf{t}_1 \equiv \mathbf{t}_2$ then provably $\mathbf{t}_1$ can be replaced with $\mathbf{t}_2$ in any composite term in which it appears. Both $\textsf{MulOrd}$ directions are crucial to the proof.}
Replacing equivalents under Lem.~\ref{lem:sumform} finishes the proof.
\end{proof}

Some disjunct of (\ref{eqn:normalform}), say that where $\Delta'= \Delta^*$, is consistent so we suppose $\varphi$ is such disjunct.
This is also a polynomial system $S$ over unknowns $\big\{\mathbb{P}(\delta)\big\}_{\delta \in \Delta}$. We now show it has a solution. Our primary tool is the following semialgebraic result \cite{Stengle1974}.
\begin{theorem}[Positivstellensatz]
\label{thm:psatz}
Let $R = \mathbb{Q}[x_1, \dots, x_n]$ and
$F, G, H$ be finite sets of polynomials in $R$. Let $\mathrm{cone}(G) \subseteq R$ be the closure of $G \cup \{s^2 : s \in R \}$ under $+$ and $\times$, and let $\mathrm{ideal}(H) = \big\{\sum_{h \in H} a_h h : a_h \in R \text{ for each } h \big\}$.
Then either $\big\{ f \neq 0, g \ge 0, h = 0 : (f, g, h) \in (F, G, H) \big\}$ has a solution over $\mathbb{R}^n$, or there exist $g \in \mathrm{cone}(G)$, $h \in \mathrm{ideal}(H)$, $n \in \mathbb{N}$
such that
\begin{equation}
  \label{eqn:certinf}
  g + h + f^{2n} = 0
\end{equation}
where $f = \prod_{f' \in F} f'$.
\qed
\end{theorem}
Each clause in $S$ easily translates to a polynomial in  Thm. \ref{thm:psatz}; a clause $\tT'_i > \0$ becomes two constraints: $\tT'_i \ge 0$ and $\tT'_i \neq 0$. If there's no solution, let $\mathbf{t} = (\mathbf{g} + \mathbf{h}) + \mathbf{f}^{2n}$ for some $\mathbf{g}, \mathbf{h}, \mathbf{f}$ as in (\ref{eqn:certinf}), where $\mathbf{f}^{2n}$ is an iterated multiplication.
We claim $\varphi \vdash \mathbf{t} \equiv \underline{0} \land \mathbf{t} > \underline{0}$ so that $\varphi$ is inconsistent, a contradiction.
We use the principles below, all derivable from $\Poly$:
  \begin{eqnarray*}
    \textsf{AddPos}. && \tT_1 \geqslant \0 \land \tT_2 > \0 \rightarrow \tT_1 + \tT_2 > \0 \\
    \textsf{MulPos}. && \tT_1 > \0 \land \tT_2 > \0 \rightarrow \tT_1 \cdot \tT_2 > \0 \\
    \textsf{NegAdd}. && -(\tT_1 + \tT_2) \equiv (-\tT_1) + (-\tT_2) \\
    \textsf{NegMul}. && - (\tT_1 \cdot \tT_2) \equiv (-\tT_1) \cdot \tT_2 \\
    \textsf{NegNeg}. && -(-\tT) \equiv \tT \\
    \textsf{OrdSq}. && \tT \cdot \tT \geqslant \0.
  \end{eqnarray*}
First, we show $\varphi \vdash \tT > \0$. Note that
$\varphi \vdash \mathbf{g} \geqslant \0$ by $\textsf{OrdSq}$ and $\varphi \vdash \mathbf{h} \equiv \0$ by $\textsf{ZeroMul}$ given Thm. \ref{thm:psatz} and $S$. 
Also, $\varphi \vdash \mathbf{f} > \0$ by $\textsf{NonDegen}$ if $m' = 0$ in $S$ or $n = 0$ in (\ref{eqn:certinf}) and by $\textsf{MulPos}$ otherwise.
So $\varphi \vdash (\mathbf{g} + \mathbf{h}) + \mathbf{f}^{2n} > \0$ by $\textsf{AddPos}$.
Now we show $\varphi \vdash
\mathbf{t} \equiv \underline{0}$. In fact, we don't need $\varphi$. We show that $\Poly$ is powerful enough to simplify polynomials; then by soundness and since (\ref{eqn:certinf}) holds identically,
$\vdash \mathbf{t} \equiv \underline{0}$.
Using $\textsf{MulDist}$, $\textsf{NegAdd}$, $\vdash \mathbf{t} \equiv \mathbf{m}$ where $\mathbf{m}$ is a sum of non-$\0$ monomials, each of which is either $\1$ itself or contains no factors of $\1$ ($\One$), and contains at most one $-$ sign ($\textsf{NegMul}$, $\textsf{NegNeg}$). By $\textsf{MulComm}, \textsf{MulAssoc}$ group the factors in each left-associatively and in increasing (lexicographic) order of their variables.
Then with $\AddComm$, $\AddAssoc$, $\textsf{Neg}$ group and cancel out to $\0$ equal but opposite monomials. Adding all the $\0$s, we have $\vdash \mathbf{m} \equiv \0$.

Thus $S$ has solution $s$. We move toward constructing a model satisfying \eqref{eqn:normalform}.
Define $X \influencesindirect_\delta Y$ if $\models \delta \rightarrow [\alpha \land x] y \land [\alpha \land x'] \lnot y$ for some $\alpha, x \neq x', y$.
\begin{lemma}\label{lem:metamodel}
Let $\mathfrak{P} : \Delta \to [0, 1]$ be computable and suppose
$\sum_{\delta} \mathfrak{P}(\delta) = 1$,
that $\mathrm{supp}(\mathfrak{P}) \subset \Delta_{\mathrm{sat}}$,\footnote{$\supp(\mathfrak{P}) = \{\delta \ssep \mathfrak{P}(\delta) > 0\}$ is the \emph{support} of $\mathfrak{P}$.}
and $\bigcup_{\delta \in \mathrm{supp}(\mathfrak{P})} \influencesindirect_{\delta}$ is acyclic.
Then there is $\mathfrak{M} \in \mathcal{M}^*$ such that $\semantics{\mathbb{P}(\delta)}_{\mathfrak{M}} = \mathfrak{P}(\delta)$ for every $\delta \in \Delta$.
\end{lemma}
\begin{proof}
For each $\delta \in \supp(\mathfrak{P}) \subset \Delta_{\cons}$, 
there are structural mechanisms $\mathcal{F}(\delta) = \big\{f^{\delta}_V\big\}_{V}$
such that 
for all $V \notin \Vphi$, $f^\delta_V$ is constant; $f^\delta_V(\mathbf{v}_1) = f^\delta_V(\mathbf{v}_2)$ if $\mathbf{v}_2(V_\varphi) = \mathbf{v}_2(V_\varphi)$ for all $V_\varphi \in \Vphi$;
and $\mathcal{F}(\delta) \models \delta$ \cite[Thm.~2]{II2019}.
Consider $\mathfrak{M} = (\mathcal{F}, P)$ with one exogenous variable $\mathcal{U} = \{ U \}$ where $\text{Val}(U) = \supp(\mathfrak{P})$ and $P(U = \delta) = s(\delta)$.
Define $\mathcal{F} = \{f_V\}_{V}$ by $f_{V}(\mathbf{v}, \delta) = f_{V}^{\delta}(\mathbf{v})$.
$\mathcal{F}, \delta_1 \models \delta_2$ iff $\delta_1 = \delta_2$ since elements of $\Delta$ are jointly $\Lfull$-unsatisfiable. 
Thus 
$\semantics{\mathbb{P}(\delta)}_{\mathfrak{M}} = s(\delta)$ for all $\delta \in \Delta$. Clearly $\mathfrak{M}$ can be made to satisfy Def. \ref{compmodel}, since $\mathfrak{P}$ is computable.

We claim $\left.\influences_{\mathfrak{M}}\right. = \bigcup_{\delta \in \supp(\mathfrak{P})} \influences_{\mathcal{F}(\delta)}\footnote{The definition of influence implies that for a deterministic model $\mathcal{F} = \{f_V\}_{V}$, $X \influences_{\mathcal{F}} Y$ if there are $\mathbf{v}_1, \mathbf{v}_2 \in \Val(\mathbf{V})$ such that $\mathbf{v}_1(V) \neq \mathbf{v}_2(V)$ iff $V = X$ and $f_Y(\mathbf{v}_1) \neq f_Y(\mathbf{v}_2)$.} \subseteq \bigcup_{\delta \in \supp(\mathfrak{P})} \influencesindirect_\delta$\footnote{The last is an inclusion rather than an equality: $\influences$ is direct influence while $\influencesindirect$ is indirect.} so that $\influences_{\mathfrak{M}}$ is a dag on $\Vphi$; along with constancy of each $\mathcal{F}(\delta)$ outside $\Vphi$ this shows $\mathfrak{M}$ recursive (cf. Prop.~\ref{prop:semimarkovisrecursive}).
The equality is trivial.
To obtain the inclusion, if $X \influences_{\mathcal{F}(\delta)} Y$ then $X, Y \in \Vphi$, and
there are $\mathbf{v}_1, \mathbf{v}_2$ differing only at $X$ such that $y_1 \neq y_2$ where $y_1 = f_Y^\delta(\mathbf{v}_1)$ and $y_2 = f_Y^\delta(\mathbf{v}_2)$.
Let $\mathbf{Z} = \Vphi \setminus \{X, Y\}$, let
$\mathbf{z} = \mathbf{v}_1(\mathbf{Z}) = \mathbf{v}_2(\mathbf{Z})$, and let $x_1 = \mathbf{v}_1(X) \neq x_2 = \mathbf{v}_2(X)$.
Since $Y \ninfluences_{\mathcal{F}(\delta)} Y$ and $V \ninfluences_{\mathcal{F}(\delta)} Y$ for any $V \notin \Vphi$,
$\mathcal{F}(\delta) \models [\mathbf{z} \land x_1] y_1 \land [\mathbf{z} \land x_2]y_2$.
Let $\delta = [\mathbf{z}\land x_1]\mathbf{v}^{\mathbf{z}\land x_1}_\varphi \land \dots \land [\mathbf{z}\land x_2] \mathbf{v}_\varphi^{\mathbf{z}\land x_2}$.
If $\mathbf{v}^{\mathbf{z}\land x_1}_\varphi(Y) \neq y_1$ or $\mathbf{v}^{\mathbf{z}\land x_2}_\varphi(Y) \neq y_2$ then since $\mathcal{F}(\delta) \models \delta$, we have $\mathcal{F}(\delta) \models \bot$, a contradiction.
Thus $\models \delta \rightarrow [\mathbf{z}\land x_1]y_1 \land [\mathbf{z}\land x_2] \lnot y_1$ and $X \influencesindirect_\delta Y$.
\end{proof}
Note $s : \Delta \to [0, 1]$ can be assumed computable, as $S$ has an algebraic \cite{Tarski1949-TARADM-3} and \emph{a fortiori} computable solution.
Since $\mathrm{supp}(s) = \Delta^*$, by $\textsf{Dist}$ and $\textsf{ProbRec}$, applying Lem.~\ref{lem:metamodel} with $\mathfrak{P} = s$ gives $\mathfrak{M}\models \varphi$.
\subsubsection{$\AX_2$}
Soundness: only
$\textsf{IncExc}$ is nontrivial.
We leave it to the reader to generalize
the following $\AX_3$ derivation of a prototypical instance.
By $\textsf{NonNeg}$, we have $\mathbb{P}\big([y \land z] \lnot x \land [x \land z] \lnot y \land [x\land y] z\big) \geqslant \0$.
Marginalizing with $\textsf{Add2}$, we obtain
$\mathbb{P}\big([x\land y] z \big) - \mathbb{P}\big([x\land y] z \land ([y\land z] x \lor [x \land z] y) \big) \geqslant \0$.
By $\textsf{Dist}$ and $\mathbb{P}(\epsilon \lor \zeta) \equiv \mathbb{P}(\epsilon) + \mathbb{P}(\zeta) - \mathbb{P}(\epsilon \land \zeta)$ we have that
$\mathbb{P}\big([x\land y] z\big) - \mathbb{P}\big([x\land y] z \land [y \land z] x\big) - \mathbb{P}\big([x \land y] z \land [x\land z] y \big) + \mathbb{P}\big([x\land y] z \land [y \land z] x \land [x \land z] y\big) \geqslant \0$.
Now $\textsf{Dist}$ribute ``reversibility'' $\models [\alpha](\beta\land\gamma) \leftrightarrow [\alpha\land\beta]\gamma \land [\alpha\land\gamma]\beta$ \cite{GallesPearl,Halpern2000}  across the last three terms.
Obtain at last $\mathbb{P}\big([x\land y] z\big) - \mathbb{P}\big([y] (x \land z)\big) - \mathbb{P}\big([x ](y\land z) \big) + \mathbb{P}(x\land y \land z) \geqslant \0$, the conjunct in $\textsf{IncExc}$ corresponding to $\mathbf{w} = x \land y \land z$, $\mathbf{Z} = \{Z \}$.

Completeness: we obtain analogues of Lem.~\ref{lem:sumform}, \ref{lem:normalform}.
Let $\textsf{IncExc}_\varphi$ be the instance of $\textsf{IncExc}$ for $\mathbf{W} = \mathbf{V}_\varphi$.
Where $\prec$ is an order on $\mathbf{V}_\varphi$, let
$\textsf{ProbRec2}_{\varphi}(\prec)$ be the $\Ll_2$
\begin{equation*}
\label{eq:probrec2phi}
\bigwedge_{\substack{\mathbf{X}, \mathbf{Y}, \mathbf{Z} \subseteq \mathbf{V}_\varphi \\ \mathbf{X} \cap \mathbf{Y} = \varnothing \\ \mathbf{Y} \prec \mathbf{Z} \\ \mathbf{x}, \mathbf{y}, \mathbf{z}}}
\sum_{\mathbf{v}_\varphi \models \mathbf{y}}
\mathbb{P}\big([\mathbf{x}]\mathbf{v}_\varphi\big)
\equiv
\sum_{\mathbf{v}_\varphi \models \mathbf{y}}
\mathbb{P}\big([\mathbf{x} \land \mathbf{z}]\mathbf{v}_\varphi\big).
\end{equation*}
Define $\Delta_2 = \big\{[\alpha]\mathbf{v}_\varphi : \alpha \in \Lintphi, \mathbf{v}_\varphi \in \Val(\Vphi)\big\}$
and let $\Lcondphi = \Lcond \cap \Lfullphi$.
Below, $\vdash_2$ will denote $\AX_2$ provability.
\begin{lemma} \label{lem:sumform2}
Let $[\alpha]\beta \in \Lcondphi$. Then
 $\vdash_2 \mathbb{P}\big([\alpha]\beta\big) \equiv \sum_{\mathbf{v}_\varphi \models \beta}
\mathbb{P}\big([\alpha]\mathbf{v}_\varphi\big)$. \qed
\end{lemma}
\begin{proof}
Like that of Lem.~\ref{lem:sumform}, but using $\textsf{Add2}$ and $\textsf{Def}$.
\end{proof}
\begin{lemma}
\label{lem:normalform2}
Let $\varphi$ be a conj. of lit.
There are polynomial terms 
$\{\mathbf{t}_i, \mathbf{t}'_{i'}\}_{{i, i'}}$
in the variables $\big\{\mathbb{P}(\delta)\big\}_{\delta \in \Delta_2}$
such that
\begin{multline}  \label{eqn:normalform2}
\vdash_2 \varphi \leftrightarrow
\bigvee_{\prec}\Bigg[\bigwedge_{\delta \in \Delta_2 } \mathbb{P}(\delta) \geqslant \0 \land \bigwedge_{\alpha} \sum_{\mathbf{v}_\varphi} \mathbb{P}\big([\alpha]\mathbf{v}_\varphi\big) \equiv \1 \\
    \land \bigwedge_{\alpha}
    \sum_{\mathbf{v}_\varphi \models \alpha}
      \mathbb{P}\big([\alpha] \mathbf{v}_\varphi\big) \equiv \underline{1}
    \land \textsf{ProbRec2}_\varphi(\prec) \\ \land \textsf{IncExc}_\varphi
    \land \bigwedge_{i}
      \mathbf{t}_i \geqslant \underline{0}
    \land
    \bigwedge_{i'}
      \mathbf{t}'_{i'} > \underline{0}\Bigg].
\end{multline}
\end{lemma}
\begin{proof}
Like that of Lem.~\ref{lem:normalform}, but using Lem.~\ref{lem:sumform2}. Note that $\vdash_2 \mathbb{P}\big([\alpha]\alpha\big) \equiv \1$ since $\models [\alpha]\alpha$.
\end{proof}
We can apply Lem.~\ref{lem:normalform2} since we can again take $\varphi$ to be a conjunction of $\Ll_2$ literals.
Let $\prec^*$ be one whose disjunct in
(\ref{eqn:normalform2}) is
consistent.
This disjunct is 
a polynomial system $S$.
Being consistent with $\textsf{Poly}$, $S$ has a solution $s$, which constitutes a \emph{$\mathbf{P_*}$ set} for $\mathbf{V}_\varphi$ in the parlance of \cite{TianKP06},
where an exact characterization of such sets is obtained.
$S$ entails all criteria\footnote{Namely, ``effectiveness, recursiveness, directionality, and inclusion-exclusion inequalities.'' Directionality holds for the order $\prec^*$.}
of this characterization,
so there exists $\mathfrak{M}$ inducing $s$ \cite[Thm.~2]{TianKP06}.
Since $s$ can again be taken computable we can take $\mathfrak{M} \in \mathcal{M}^*$.
\end{proof}

\section{Complexity}
Let $\textsc{Prob-Causal-Sat}_i$ be the problem of deciding if a given formula $\varphi \in \mathcal{L}_i$ is satisfiable. The completeness proof above delivers only very loose complexity bounds, which we will now tighten to polynomial space. As $\mathsf{PSPACE}$ closes under complement, this shows validity is also $\mathsf{PSPACE}$.


For each $V \in \mathbf{V}_\varphi$, let $\Val_\varphi(V) \subset \Val(V)$ be the subset of values for $V$ appearing explicitly in $\varphi$, and let $\Lintphi^- \subset \Lintphi$ be the subset of interventions appearing in $\varphi$. 
For each $V \in \mathbf{V}_\varphi$, let $B(V) = \{ V = v\}_{v \in \Val_\varphi(V)} \cup \{ \cancel{\beta}_V \} \subset \Lprop$, where $\cancel{\beta}_V = \bigwedge_{v\in \Val_\varphi(V)} V \neq v$.
Then define a subset $\Delta_\varphi \subset \Lfullphi$
by $\Delta_\varphi = \Big\{ \bigwedge_{\substack{\alpha \in \Lintphi^-}} \big([\alpha] \bigwedge_{V \in \mathbf{V}_\varphi} \beta_V^\alpha \big) : \beta_V^\alpha \in B(V) \text{ for each } \alpha \in \Lintphi^-, V \in \Vphi \Big\}$.
Let $E \subset \Lfullphi$ be the set of base formulas appearing in $\varphi$, i.e., those $\epsilon$ such that $\mathbb{P}(\epsilon)$ appears in $\varphi$.
Where $\prec$ is a total order on $\mathbf{V}_\varphi$, let
$\mathcal{M}_\prec$ be the class of deterministic SCMs recursive over $\prec$ and let
$\Delta_\prec = \{\delta \in \Delta_\varphi : \delta \text{ is satisfiable in }\mathcal{M}_\prec\}$.
The next two results are analogous to Lem.~\ref{lem:sumform},~\ref{lem:metamodel} resp.:
\begin{lemma}\label{lem:sumform:small:order}
 If $\epsilon \in E$ and $\prec$ is the restriction of the recursive order of $\mathfrak{M}$ to $\mathbf{V}_\varphi$,
 then $\mathfrak{M} \models \mathbb{P}(\epsilon) \equiv \sum_{\substack{\delta \in \Delta_\prec \\ \delta \models \epsilon}} \mathbb{P}(\delta)$.
\end{lemma}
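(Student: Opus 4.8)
The plan is to prove this semantic identity by exhibiting $\{S_{\mathfrak{M}}(\delta)\}_{\delta \in \Delta_\varphi}$ as a measurable partition of the exogenous space $\Val(\mathbf{U})$ that refines the event $S_{\mathfrak{M}}(\epsilon)$, mirroring the syntactic argument of Lem.~\ref{lem:sumform} but now relativized to the recursive order $\prec$ of the fixed model $\mathfrak{M}$. First I would check that the events $S_{\mathfrak{M}}(\delta)$, $\delta \in \Delta_\varphi$, are pairwise disjoint and cover $\Val(\mathbf{U})$. Indeed, each $\mathbf{u}$ determines, by recursiveness (unique solutions under every intervention in $\Lintphi^-$, which are all finite), a definite value for every $V \in \mathbf{V}_\varphi$ under every $\alpha \in \Lintphi^-$; since $B(V)$ is mutually exclusive and exhaustive---a value either equals one of the finitely many $v \in \Val_\varphi(V)$ or satisfies $\cancel{\beta}_V$---exactly one $\beta_V^\alpha \in B(V)$ holds for each pair $(\alpha, V)$, hence exactly one $\delta \in \Delta_\varphi$ has $\mathcal{F}, \mathbf{u} \models \delta$. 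Measurability of each $S_{\mathfrak{M}}(\delta)$ follows from measurability of $\mathfrak{M}$.

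Two facts then do the real work. \emph{Determination:} for $\epsilon \in E$ and any consistent $\delta \in \Delta_\varphi$, either $\models \delta \to \epsilon$ or $\models \delta \to \lnot \epsilon$. This holds because every base formula $\epsilon$ with $\mathbb{P}(\epsilon)$ occurring in $\varphi$ is a Boolean combination of atoms $[\alpha](V = v)$ with $\alpha \in \Lintphi^-$ and $v \in \Val_\varphi(V)$, and $\delta$ pins down the truth value of each such atom uniformly across all its models ($[\alpha](V=v)$ is true under $\delta$ iff $\beta_V^\alpha$ is $V = v$). \emph{Vanishing:} for $\delta \in \Delta_\varphi \setminus \Delta_\prec$, we have $S_{\mathfrak{M}}(\delta) = \varnothing$. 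I would argue the contrapositive: given $\mathbf{u} \in S_{\mathfrak{M}}(\delta)$, the deterministic reduct of $\mathfrak{M}$ at $\mathbf{u}$ satisfies $\delta$, so $\delta$ is consistent; moreover, whenever $X \influencesindirect_\delta Y$, the defining $\Lfull$-validity forces the interventional value of $Y$ to depend on $X$ in this reduct, which---the reduct and its interventions being recursive over $\mathfrak{M}$'s order, whose restriction to $\mathbf{V}_\varphi$ is $\prec$---yields $X \prec Y$ by Prop.~\ref{prop:semimarkovisrecursive}. Thus $\influencesindirect_\delta$ is acyclic and compatible with $\prec$, so the structural realization of the consistent $\delta$ furnished by \cite[Thm.~2]{II2019} (as in Lem.~\ref{lem:metamodel}) has $\influences \subseteq \influencesindirect_\delta \subseteq \prec$, hence is recursive over $\prec$ and lies in $\mathcal{M}_\prec$, giving $\delta \in \Delta_\prec$.

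Finally I would assemble these. By Determination, for consistent $\delta$ the event $S_{\mathfrak{M}}(\delta)$ lies entirely inside $S_{\mathfrak{M}}(\epsilon)$ when $\delta \models \epsilon$ and entirely outside it otherwise, while inconsistent $\delta$ contribute the empty event. Combined with the partition this gives $S_{\mathfrak{M}}(\epsilon) = \bigsqcup_{\delta \in \Delta_\varphi,\, \delta \models \epsilon} S_{\mathfrak{M}}(\delta)$, and by Vanishing every nonempty summand lies in $\Delta_\prec$, so the union may be taken over $\delta \in \Delta_\prec$ with $\delta \models \epsilon$. Finite additivity of $P$ then yields $\semantics{\mathbb{P}(\epsilon)}_{\mathfrak{M}} = \sum_{\delta \in \Delta_\prec,\, \delta \models \epsilon} \semantics{\mathbb{P}(\delta)}_{\mathfrak{M}}$, i.e.\ $\mathfrak{M} \models \mathbb{P}(\epsilon) \equiv \sum_{\delta \in \Delta_\prec,\, \delta \models \epsilon} \mathbb{P}(\delta)$.

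The main obstacle is the Vanishing step: correctly transferring the recursive order of the full model $\mathfrak{M}$ over $\mathbf{V}$ to a witness living in $\mathcal{M}_\prec$ over the restricted signature $\mathbf{V}_\varphi$. This is why I would route the argument through the indirect-influence relation $\influencesindirect_\delta$ and Prop.~\ref{prop:semimarkovisrecursive}---so that order-compatibility of the realization is read off from the witnessing $\mathbf{u}$---rather than attempting a direct latent projection of $\mathfrak{M}$ onto $\mathbf{V}_\varphi$, which would require separately verifying that interventional behavior is preserved when the non-$\varphi$ variables are solved out.
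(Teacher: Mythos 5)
Your overall strategy is the paper's own: exhibit $\{S_{\mathfrak{M}}(\delta)\}_{\delta\in\Delta_\varphi}$ as a partition of the sample space, show that each consistent $\delta$ decides $\epsilon$ (via the key equivalence $\models [\alpha]\beta \leftrightarrow [\alpha]\bigvee_{\{\beta^\alpha_V\}_V\models\beta}\bigwedge_V\beta^\alpha_V$, which rests on $\textsf{Def}$ and on $B(V)$ being exclusive and exhaustive thanks to the catch-all $\cancel{\beta}_V$), and then discard the cells outside $\Delta_\prec$ because they carry no mass in $\mathfrak{M}$. Your ``partition'' and ``Determination'' steps are exactly the paper's first stage (the validity $\models \mathbb{P}(\epsilon)\equiv\sum_{\delta\in\Delta_\varphi,\,\delta\models\epsilon}\mathbb{P}(\delta)$), carried out semantically, which is appropriate since the lemma is a semantic claim about a fixed $\mathfrak{M}$.

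The one place you diverge---and where your argument as written has a gap---is ``Vanishing.'' You correctly begin by observing that any $\mathbf{u}\in S_{\mathfrak{M}}(\delta)$ yields a deterministic reduct $\mathcal{F}(\cdot,\mathbf{u})$ satisfying $\delta$; but that reduct is already recursive over $\mathfrak{M}$'s well-order, whose restriction to $\mathbf{V}_\varphi$ is $\prec$ by the lemma's hypothesis, so it is itself the witness that $\delta\in\Delta_\prec$. That one-liner is all the paper uses (``if $\delta\notin\Delta_\prec$ then $\mathfrak{M}\models\mathbb{P}(\delta)\equiv\0$''); no projection onto $\mathbf{V}_\varphi$ is needed, since $\mathcal{M}_\prec$ consists of deterministic SCMs whose order restricts to $\prec$. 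Your detour through $\influencesindirect_\delta$ and a fresh realization via \cite[Thm.~2]{II2019} is not only unnecessary but unjustified as stated: the containment $\influences_{\mathcal{F}(\delta)}\subseteq\influencesindirect_\delta$ established in Lem.~\ref{lem:metamodel} depends on $\delta\in\Delta$ fixing the outcome of \emph{every} intervention in $\Lintphi$ (in particular the full instantiations $[\mathbf{z}\land x]$ used in that proof), whereas a $\delta\in\Delta_\varphi$ constrains only the interventions in $\Lintphi^-$ and only up to the coarse cell $\cancel{\beta}_V$, so that argument does not transfer and the realization's influence relation need not respect $\prec$. Drop the detour and keep the reduct; the rest of your assembly (disjointness, determination, finite additivity) then goes through.
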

\begin{proof}
We first show that $\models \mathbb{P}(\epsilon) \equiv \sum_{\substack{\delta \in \Delta_\varphi \\ \delta \models \epsilon}} \mathbb{P}(\delta)$.
This can be shown similarly to Lem.~\ref{lem:sumform}; 
the only piece worth verifying is that $\models [\alpha]\beta \leftrightarrow [\alpha]\bigvee_{\{\beta^\alpha_V\}_V \models \beta} \bigwedge_{V} \beta^\alpha_V$ for any $[\alpha]\beta \in \Lcondphi$.
This follows since, by propositional logic we can assume $\lnot$ to occur only before atoms in $\beta$, and
$\models V \neq v \leftrightarrow \cancel{\beta}_V \lor \bigvee_{\substack{v' \in \Val_\varphi(V)\\ v' \neq v}} V = v'$
by $\textsf{Def}$, for any $V \in \mathbf{V}_\varphi$ and $v \in \Val_\varphi(V)$.
To get the final result note that if $\delta \notin \Delta_\prec$ then $\mathfrak{M} \models \mathbb{P}(\delta) \equiv \0$.
\end{proof}
\begin{lemma}\label{lem:metamodel:2}
Suppose $\mathfrak{P} : \Delta_\varphi \to [0, 1]$ is computable and
$\mathrm{supp}(\mathfrak{P}) \subset \Delta_\prec$ for some $\prec$.
Then there is $\mathfrak{M} \in \mathcal{M}^*$ such that $\semantics{\mathbb{P}(\delta)}_{\mathfrak{M}} = \mathfrak{P}(\delta)$ for every $\delta \in \Delta_\varphi$.
\qed
\end{lemma}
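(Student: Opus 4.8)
The plan is to imitate the construction in the proof of Lemma~\ref{lem:metamodel}, assembling $\mathfrak{M}$ as a finite mixture of deterministic mechanisms indexed by a single exogenous variable, with one decisive simplification: whereas in Lemma~\ref{lem:metamodel} acyclicity of the mixture had to be extracted from a hypothesis about $\bigcup_{\delta} \influencesindirect_{\delta}$, here a common recursive order $\prec$ is handed to us directly by the assumption $\supp(\mathfrak{P}) \subset \Delta_\prec$. Thus the portion of the earlier argument that was genuinely delicate becomes essentially free.

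First I would, for each $\delta \in \supp(\mathfrak{P}) \subset \Delta_\prec$, use satisfiability of $\delta$ in $\mathcal{M}_\prec$ to fix a deterministic family $\mathcal{F}(\delta) = \{f^\delta_V\}_V$ that is recursive over $\prec$ and satisfies $\mathcal{F}(\delta) \models \delta$; as in Lemma~\ref{lem:metamodel} (via \cite[Thm.~2]{II2019}) I would arrange each $f^\delta_V$ to be constant for $V \notin \Vphi$ and to depend only on the $\Vphi$-coordinates otherwise. Next I would set $\mathfrak{M} = (\mathcal{F}, P)$ with a single exogenous $U$ valued in $\supp(\mathfrak{P})$, with $P(U = \delta) = \mathfrak{P}(\delta)$, and with $f_V(\mathbf{v}, \delta) = f^\delta_V(\mathbf{v})$; here one uses that the elements of $\Delta_\varphi$ are mutually exclusive and exhaustive --- by $\textsf{Def}$ and the fact that each $B(V)$ partitions $\Val(V)$ --- so that $\mathfrak{P}$ is a distribution over a partition and $P$ is a genuine measure. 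Recursiveness of $\mathfrak{M}$ is then immediate: every $\mathcal{F}(\delta)$ respects the same order $\prec$, so the mixture is recursive over any extension of $\prec$ to a well-order on $\mathbf{V}$ (the mechanisms being constant off $\Vphi$), and no $\influencesindirect$-argument is needed. Since the $\delta$'s partition, $\mathcal{F}, \delta_1 \models \delta_2$ iff $\delta_1 = \delta_2$, whence $S_{\mathfrak{M}}(\delta) = \{\delta\}$ and $\semantics{\mathbb{P}(\delta)}_\mathfrak{M} = \mathfrak{P}(\delta)$ for every $\delta \in \Delta_\varphi$. Finally, $U$ is a finite discrete variable with computable weights $\mathfrak{P}(\delta)$ and the $f^\delta_V$ are finitely many finite-range (or constant) maps, so the whole family is uniformly computable and $\mathfrak{M}$ can be coerced into $\mathcal{M}^*$ exactly as in Lemma~\ref{lem:metamodel}.

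The main thing to verify carefully --- the only real work beyond Lemma~\ref{lem:metamodel} --- is that for each $\delta$ one obtains a \emph{single} deterministic family realizing $\delta$ that simultaneously (a)~satisfies $\delta$, (b)~is recursive over the \emph{given} $\prec$, and (c)~is locally determined and constant outside $\Vphi$; this is what membership $\delta \in \Delta_\prec$ together with the mechanism construction of \cite{II2019} must be made to yield at once. A secondary point is that realizing a conjunct $\cancel{\beta}_V$ demands a value outside $\Val_\varphi(V)$, which is available precisely because such $\delta$ are satisfiable in $\mathcal{M}_\prec$. Because the recursive order is supplied rather than derived, this lemma is in fact strictly easier than Lemma~\ref{lem:metamodel}; the bookkeeping over the coarsened value sets $B(V)$ and the verification that $\Delta_\varphi$ partitions the outcome space are the only places demanding attention.
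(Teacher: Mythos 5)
Your proposal is correct and matches the paper's intent: the paper states Lem.~\ref{lem:metamodel:2} without an explicit proof, remarking only that it is analogous to Lem.~\ref{lem:metamodel}, and your reconstruction is exactly that analogue---a finite mixture of deterministic mechanisms $\mathcal{F}(\delta)$ indexed by a single exogenous variable with weights $\mathfrak{P}(\delta)$. You also correctly identify the one genuine simplification, namely that $\supp(\mathfrak{P}) \subset \Delta_\prec$ hands you a common recursive order directly, so the $\influencesindirect$-acyclicity argument of Lem.~\ref{lem:metamodel} is not needed.
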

The crux is the following small-model property.
\begin{lemma}
  \label{lem:smallmodel}
Any satisfiable $\varphi$ has a \emph{small} model $\mathfrak{M}$ in the sense that $\left|\big\{ \delta \in \Delta_\varphi : \semantics{\mathbb{P}(\delta)}_{\mathfrak{M}} > 0 \big\}\right| \le |\varphi|$.
\end{lemma}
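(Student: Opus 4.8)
The plan is to exploit a fact implicit in Lem.~\ref{lem:sumform:small:order}: whether a model satisfies $\varphi$ depends only on the finite tuple of probabilities $\big(\semantics{\mathbb{P}(\epsilon)}_{\mathfrak{M}}\big)_{\epsilon \in E}$, since every term of $\varphi$ is a polynomial in exactly these quantities and $\varphi$ is a Boolean combination of inequalities between such terms. Hence it suffices to produce a model that agrees with a given satisfying model on each $\semantics{\mathbb{P}(\epsilon)}$, $\epsilon \in E$, but whose underlying distribution over $\Delta_\varphi$ has support of size at most $|\varphi|$.

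First I would start from any $\mathfrak{M} \models \varphi$, let $\prec$ be the restriction to $\mathbf{V}_\varphi$ of its recursive order, and set $p^*_\delta = \semantics{\mathbb{P}(\delta)}_{\mathfrak{M}}$ for $\delta \in \Delta_\varphi$. The elements of $\Delta_\varphi$ denote mutually exclusive and (by $\textsf{Def}$) jointly exhaustive events, so $p^*$ is a probability distribution on $\Delta_\varphi$; moreover Lem.~\ref{lem:sumform:small:order} gives both $\supp(p^*) \subseteq \Delta_\prec$ and $\semantics{\mathbb{P}(\epsilon)}_{\mathfrak{M}} = \sum_{\delta \models \epsilon} p^*_\delta$ for every $\epsilon \in E$.

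The heart of the argument is a Carath\'eodory-style sparsification. Associate to each $\delta \in \Delta_\varphi$ the incidence vector $v_\delta \in \{0,1\}^{E}$ whose $\epsilon$-coordinate equals $1$ when $\delta \models \epsilon$ and $0$ otherwise. Then $q^* := \big(\semantics{\mathbb{P}(\epsilon)}_{\mathfrak{M}}\big)_{\epsilon \in E} = \sum_{\delta} p^*_\delta\, v_\delta$ realizes $q^*$ as a convex combination of the vectors $\{v_\delta : \delta \in \supp(p^*)\}$ in $\mathbb{R}^{E}$. By Carath\'eodory's theorem $q^*$ is already a convex combination of at most $|E| + 1$ of these vectors; let $\mathfrak{P} : \Delta_\varphi \to [0,1]$ record the corresponding coefficients (assigning $0$ elsewhere). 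Then $\mathfrak{P}$ is a distribution with $\supp(\mathfrak{P}) \subseteq \supp(p^*) \subseteq \Delta_\prec$, satisfying $\sum_{\delta \models \epsilon}\mathfrak{P}(\delta) = q^*_\epsilon$ for all $\epsilon \in E$, and with $|\supp(\mathfrak{P})| \le |E| + 1 \le |\varphi|$ — the final inequality holding because each distinct $\epsilon \in E$ accounts for at least one of the symbols of $\varphi$ while $\varphi$ carries at least one additional relational symbol.

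Finally I would hand $\mathfrak{P}$ back to the model-construction machinery. The reduced coefficients solve a linear system with algebraic data, so $\mathfrak{P}$ may be taken algebraic and hence computable; since $\supp(\mathfrak{P}) \subseteq \Delta_\prec$, Lem.~\ref{lem:metamodel:2} yields $\mathfrak{M}' \in \mathcal{M}^*$ with $\semantics{\mathbb{P}(\delta)}_{\mathfrak{M}'} = \mathfrak{P}(\delta)$ for all $\delta$. Applying Lem.~\ref{lem:sumform:small:order} to $\mathfrak{M}'$ gives $\semantics{\mathbb{P}(\epsilon)}_{\mathfrak{M}'} = \sum_{\delta \models \epsilon}\mathfrak{P}(\delta) = \semantics{\mathbb{P}(\epsilon)}_{\mathfrak{M}}$ for each $\epsilon \in E$, so $\mathfrak{M}'$ agrees with $\mathfrak{M}$ on every term of $\varphi$ and therefore $\mathfrak{M}' \models \varphi$; and $\big|\{\delta : \semantics{\mathbb{P}(\delta)}_{\mathfrak{M}'} > 0\}\big| = |\supp(\mathfrak{P})| \le |\varphi|$, as required. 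I expect the main obstacle to be bookkeeping rather than depth: one must ensure the sparsification does not leave $\Delta_\prec$ (handled by combining only those $v_\delta$ with $\delta \in \supp(p^*)$) and that the resulting distribution remains computable so that Lem.~\ref{lem:metamodel:2} applies, and one must verify the counting $|E|+1 \le |\varphi|$ against the chosen measure of formula size.
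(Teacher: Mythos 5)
Your proposal is correct and follows essentially the same route as the paper: both fix a satisfying model, observe that $\varphi$ depends only on the $|E|$ quantities $\semantics{\mathbb{P}(\epsilon)}_{\mathfrak{M}}$, sparsify the induced distribution on $\Delta_\prec$ subject to the $|E|+1$ linear constraints, and feed the result to Lem.~\ref{lem:metamodel:2}. The only cosmetic difference is that you invoke Carath\'eodory's theorem where the paper cites the existence of a nonnegative basic solution of a linear system with $|E|+1$ equations --- the same fact in this setting --- and you spell out the computability of the sparse solution and the verification that the new model still satisfies $\varphi$, which the paper leaves implicit.
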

\begin{proof}
We have $|E| < |\varphi|$. 
Let $\mathfrak{M}'$, $\prec$ be such that $\mathfrak{M}' \models \varphi$ and $\mathfrak{M}'$ is recursive with order $\prec$ on $\mathbf{V}_\varphi$.
Consider the system
$S = \big\{\sum_{\delta} \mathbb{P}(\delta) = 1 \big\}
\cup
\big\{ \sum_{\substack{\delta \models \epsilon}} \mathbb{P}(\delta) = {\semantics{\mathbb{P}(\epsilon)}_{\mathfrak{M}'}} \big\}_{\epsilon\in E}$ in the unknowns $\big\{ \mathbb{P}(\delta) \big\}_{\delta \in \Delta_\prec}$, which has a solution by Lem.~\ref{lem:sumform:small:order}.
By a fact of linear algebra \cite[p.~145]{Chv1983}, since $S$ has $|E| + 1$ equations, it has a nonnegative solution $s$ where at most $|E| + 1$ variables are nonzero. Apply Lem.~\ref{lem:metamodel:2} to $s$.
\end{proof}

\begin{lemma}\label{lem:detlogic:p}
Given $\delta \in \Delta_\varphi$ and $\prec$, deciding whether or not $\delta \in \Delta_\prec$ is $\mathsf{P}$ in time parameter $|\varphi|$.
 \qed
\end{lemma}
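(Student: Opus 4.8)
The plan is to decide $\delta\in\Delta_\prec$ by attempting to \emph{build} a witnessing deterministic SCM recursive over $\prec$ via a single forward sweep through $\mathbf{V}_\varphi$ in $\prec$-order. First observe that since $\delta=\bigwedge_{\alpha\in\Lintphi^-}[\alpha]\bigwedge_{V\in\mathbf{V}_\varphi}\beta_V^\alpha$ mentions only the finitely many interventions in $\Lintphi^-$ and only the variables in $\mathbf{V}_\varphi$, a deterministic $\mathcal{F}$ satisfies $\delta$ exactly when, for each $\alpha\in\Lintphi^-$, the unique solution of $i_\alpha(\mathcal{F})$ places each $V\in\mathbf{V}_\varphi$ in the cell named by $\beta_V^\alpha$ (a fixed value $V=v$ with $v\in\Val_\varphi(V)$, or the ``leftover'' cell $\cancel{\beta}_V$). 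By the model-construction used in Lem.~\ref{lem:metamodel:2} we may moreover restrict attention to mechanisms that depend only on $\mathbf{V}_\varphi$ and are constant outside it, so the whole instance---$\le|\varphi|$ interventions, $\le|\varphi|$ variables, each with a cell drawn from a set of size $\le|\varphi|$---has description size polynomial in $|\varphi|$.

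The sweep simulates all $\le|\varphi|$ interventional runs simultaneously. At variable $V$, having already fixed consistent values for every $\prec$-predecessor in each run: for a run $\alpha$ with $V\in\mathrm{dom}(\alpha)$ the value is forced to $\alpha(V)$, and I simply check that $\alpha(V)$ lies in the cell $\beta_V^\alpha$. For the remaining runs I group them by the tuple of predecessor values computed so far; recursiveness over $\prec$ means $f_V$ is a function of exactly this tuple, so all runs in one group must receive a common value. Hence I test whether the cells $\{\beta_V^\alpha\}$ of a group are \emph{jointly realizable}---they share a specific value, or are all $\cancel{\beta}_V$ with some value of $\Val(V)\setminus\Val_\varphi(V)$ still available---and, where the shared cell is $\cancel{\beta}_V$, I assign a \emph{fresh} leftover value so as to keep distinct groups maximally separated for later stages. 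The algorithm rejects if any cell check or joint-realizability test fails, and accepts if the sweep completes.

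For correctness, acceptance yields an honest witness: the per-group assignments define each $f_V$ on the profiles actually arising (extend it arbitrarily elsewhere, and take all $V\notin\mathbf{V}_\varphi$ constant), the result is recursive over $\prec$ by construction, and every run lands in its prescribed cells, so $\delta$ is satisfiable in $\mathcal{M}_\prec$. Conversely, from any witnessing model I read off realized values obeying the cell, intervention, and functional-consistency constraints; an exchange argument then shows the greedy sweep succeeds, the point being that the greedy partition of runs is \emph{at least as fine} as the model's---splitting two runs to which the model happened to assign equal leftover values only relaxes the consistency demands while preserving cell membership---so the sweep can only encounter a joint-realizability failure that the model itself could not have avoided. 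Each of the $\le|\varphi|$ steps compares $\le|\varphi|$ runs across $\le|\varphi|$ predecessors and consults $\Val(V)$, so the total running time is polynomial in $|\varphi|$, giving the claimed $\mathsf{P}$ bound.

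I expect the delicate point to be exactly this completeness direction, and within it the bookkeeping forced by the value sets being merely finite: when $\Val(V)\setminus\Val_\varphi(V)$ is too small to give every group its own leftover value, greedy separation is impossible and some groups must be merged, so the exchange argument has to verify that such forced merges occur uniformly across \emph{all} models---that they reflect a genuine obstruction rather than an artifact of the greedy choice---instead of causing a spurious rejection. The routine parts (the cell-membership tests, the joint-realizability criterion, and the soundness direction) are straightforward once the forward-sweep framing is in place.
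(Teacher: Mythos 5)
Your forward sweep is essentially the paper's algorithm in different clothing: the paper records, for each row $\alpha\in\Lintphi^-$ and column $V_i$, the constraint $f_{V_i}\big(\beta^\alpha_{V_1},\dots,\beta^\alpha_{V_{i-1}}\big)=\beta^\alpha_{V_i}$ and rejects exactly when two cells in one column map the same input profile to different outputs, which is your grouping-by-predecessor-profile consistency check. The one point you flag as delicate is vacuous, and seeing why collapses your algorithm onto the paper's: you never need a fresh leftover value per group, since a single $v^*\in\Val(V)\setminus\Val_\varphi(V)$ per variable suffices (this is exactly the paper's $\gamma^\alpha_V$ device). Groups are indexed by the \emph{entire} tuple of predecessor values, so assigning the same leftover value at $V_j$ to two groups that already differ at an earlier coordinate cannot merge them; all that is required is that runs with \emph{different} cells at $V_j$ receive different values, which is automatic because distinct specific cells are disjoint from each other and from $\Val(V_j)\setminus\Val_\varphi(V_j)$. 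Hence ``same value profile so far'' coincides with ``same cell-name profile so far,'' no merges are ever forced, and no spurious rejection can arise. What your completeness direction actually needs---and what makes the exchange argument go through---is the observation that in any $\mathcal{F}\in\mathcal{M}_\prec$ with $\mathcal{F}\models\delta$, two runs whose cell names agree at every $V_j\prec V_i$ receive identical values at \emph{every} variable preceding $V_i$, including those outside $\mathbf{V}_\varphi$ (an induction along $\prec$, using that interventions in $\Lintphi^-$ pin variables only to values in $\Val_\varphi$ and never touch variables outside $\mathbf{V}_\varphi$); this shows your greedy partition is at least as fine as any model's, so a realizability failure in a greedy group is inherited by the containing model group. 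The only genuine edge case is $\Val(V)=\Val_\varphi(V)$, where $\cancel{\beta}_V$ is unsatisfiable and any $\delta$ containing it must be rejected outright; this deserves a sentence in either write-up.
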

\begin{proof}
Label $\mathbf{V}_\varphi = \{V_i\}_{1 \le i \le n}$ so that $V_1 \prec \dots \prec V_n$, and let $\delta = \bigwedge_{\alpha \in \Lintphi^-} \big([\alpha] \bigwedge_V \beta^{\alpha}_V\big)$.
Declare the answer to be ``no'' if there is any $\alpha$, $V$ such that $V \in \mathrm{dom}(\alpha)$\footnote{Here $\alpha$ represents an intervention considered as a partial function (see Def.~\ref{defn:scmintervention}).} and $\beta^{\alpha}_V \neq \alpha(V)$.
Construct a table whose rows are labeled by the elements of $\Lintphi^-$, and columns by the variables $V_1, \dots, V_n$.
Populate the cells by the following procedure: in row $\alpha \in \Lintphi^-$, column $V_i$, write $f_{V_i} \big(\beta^\alpha_{V_1}, \dots, \beta^\alpha_{V_{i-1}}\big) = \beta^\alpha_{V_i}$ if $V_i \notin \mathrm{dom}(\alpha)$ and otherwise leave the cell blank.
Two (non-blank) cells at the same column $V_i$ are said to be conflicting if
their functions map the same element of the domain to different elements of the range.
That is, two cells $f_{V_i}(b_1) = b_2$, $f_{V_i}(b'_1) = b'_2$ conflict if
$b_1 = b'_1$ and $b_2 \neq b'_2$.
Declare ``no'' if there is any column with two conflicting cells, and ``yes'' otherwise. The table contains $O\big(|\varphi|^2\big)$ cells.
Note that encoding the atom $V = v$ is $O\big(|\varphi|\big)$, and encoding $\cancel{\beta}_V$ takes $O\big(|\varphi|^2\big)$ bits. 
Thus the space required to store a cell is $O\big(|\varphi|^3\big)$, and checking for conflicts is in $\mathsf{P}$.

Soundness:
for each $V$, choose a fixed value $v^* \in \Val(V) \setminus \Val_\varphi(V)$ and for each $\alpha$ let $\gamma^\alpha_V = \beta^\alpha_V$ if $\beta^\alpha_V \in \Val_\varphi(V)$, and $\gamma^\alpha_V$ be $V = v^* \in \Lprop$ otherwise.
Replace every $\beta^\alpha_V$ appearing in the above table with $\gamma^\alpha_V$ and construct a deterministic model $\mathcal{F} = \{f_V\}_V$ such that $f_{V_i}$ satisfies all the constraints of column $V_i$; this is possible since no two cells in this column conflict.
We claim that $\mathcal{F} \models \delta$. 
Let $\alpha \in \Lintphi^-$; we show that $\mathcal{F} \models [\alpha] \bigwedge_V \beta^{\alpha}_V$.
The inductive hypothesis: $\mathcal{F} \models [\alpha] \gamma^{\alpha}_{V_1} \land \dots \land \gamma^{\alpha}_{V_i}$.
Base case: if $V_1 \in \mathrm{dom}(\alpha)$ then $\gamma^{\alpha}_{V_1}$ is $V_1 = \alpha(V_1)$ and if $V_1 \notin \mathrm{dom}(\alpha)$ then $\mathcal{F} \models \gamma^{\alpha}_{V_1}$ by construction of $\mathcal{F}$.
Inductive step: for each $1 \le i \le m$ we have that $\mathcal{F} \models [\alpha] \gamma^\alpha_{V_i}$, and we claim that $\mathcal{F} \models [\alpha] \gamma^\alpha_{V_{m+1}}$. This is trivial if $V_{m+1} \in \mathrm{dom}(\alpha)$ and if not, follows easily by construction of $\mathcal{F}$.

Completeness:
suppose $\delta \in \Delta_\prec$ and $\mathcal{F} = \{f_V\}_V \models \delta$.
Fix $\{v^*\}_V$ and define $\{\gamma^\alpha_V\}_{\alpha, V}$ as above and suppose without loss that for each $V \in \mathbf{V}_\varphi$, the image of $f_V$ is contained in $\Val_\varphi(V) \cup \{v^*\}$;\footnote{E.g., transform $\mathcal{F}$ to $\mathcal{F}'$ by remapping any outputs outside this set to $v^*$. This ensures that $\mathcal{F} \models [\alpha] \cancel{\beta}_V \leftrightarrow \mathcal{F}' \models [\alpha] (V = v^*)$ for any $V$, $\alpha$ so $\mathcal{F}' \models \delta$.} this entails that $\mathcal{F} \models [\alpha] \beta^\alpha_V \leftrightarrow [\alpha]\gamma^\alpha_V$.
Note that there cannot be any $\alpha$, $V$ such that $V \in \mathrm{dom}(\alpha)$ and $\beta^\alpha_V \neq \alpha(V)$ since this would mean $\mathcal{F} \not\models \delta$.
We claim that in each column $V_i$ of the table constructed from $\delta$, no two cells conflict.
Suppose toward contradiction the cells at rows $\alpha$, $\alpha'$ conflict.
Then we have that 
\begin{align*}
 \mathcal{F} \models [\gamma^{\alpha}_{V_1}, \dots, \gamma^{\alpha}_{V_{i-1}}] \gamma^{\alpha}_{V_i} \land [\gamma^{\alpha}_{V_1}, \dots, \gamma^{\alpha}_{V_{i-1}}] \gamma^{\alpha'}_{V_i}
\end{align*}
where $\gamma^{\alpha}_{V_i} \neq \gamma^{\alpha'}_{V_i}$, which is patently absurd.
\end{proof}
%

\begin{theorem} 
  \label{thm:complexity}
  Each $\textsc{Prob-Causal-Sat}_i \in \mathsf{PSPACE}$.
\end{theorem}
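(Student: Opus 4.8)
The plan is to convert satisfiability of $\varphi \in \Ll_i$ into an existential query over the reals and discharge it with the $\mathsf{PSPACE}$ procedure of \cite{Canny} for the existential theory of a real closed field. The key leverage is the small-model property (Lem.~\ref{lem:smallmodel}): if $\varphi$ is satisfiable, it has a model whose induced distribution on $\Delta_\varphi$ is supported on a set $D$ with $|D| \le |\varphi|$. Although $\Delta_\varphi$ is itself of exponential size, each individual $\delta \in \Delta_\varphi$ has a polynomial-size encoding (a choice of $\beta^\alpha_V \in B(V)$ ranging over the $\le |\varphi|$ interventions $\alpha \in \Lintphi^-$ and $\le |\varphi|$ variables $V \in \Vphi$), so such a support $D$ can be written down in polynomial space. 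Thus I never materialize $\Delta_\varphi$, only a small support and a polynomial-size real-arithmetic sentence built from it.

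Concretely, the decision procedure ranges over all total orders $\prec$ on $\mathbf{V}_\varphi$ and all candidate supports $D \subseteq \Delta_\varphi$ with $|D| \le |\varphi|$, reusing space between iterations (equivalently, it guesses $\prec$ and $D$ and appeals to $\mathsf{NPSPACE} = \mathsf{PSPACE}$). For each fixed pair $(\prec, D)$ it first checks, via Lem.~\ref{lem:detlogic:p}, that every $\delta \in D$ belongs to $\Delta_\prec$, which is polynomial time. It then introduces a real unknown $p_\delta$ for each $\delta \in D$ and forms the existential sentence $\Phi_{\prec, D}$ over $\mathbb{R}$ asserting the existence of values $\{p_\delta\}$ satisfying $p_\delta \ge 0$, $\sum_{\delta \in D} p_\delta = 1$, together with the formula obtained from $\varphi$ by replacing each occurrence of $\mathbb{P}(\epsilon)$ (for $\epsilon \in E$) with the sum $\sum_{\delta \in D,\ \delta \models \epsilon} p_\delta$; here the test $\delta \models \epsilon$ is a Boolean evaluation of $\epsilon$ against the assignments fixed by $\delta$, hence polynomial time and uniform across $i$ (only the contents of $E$ differ between levels). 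Finally it decides $\Phi_{\prec, D}$ using \cite{Canny}, accepting as soon as some $(\prec, D)$ yields a solution.

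Correctness runs through the completeness-theorem machinery. Any solution $\{p_\delta\}$ to $\Phi_{\prec, D}$ determines a computable $\mathfrak{P}$ supported on $\Delta_\prec$, so Lem.~\ref{lem:metamodel:2} yields $\mathfrak{M} \in \mathcal{M}^*$ with $\semantics{\mathbb{P}(\delta)}_{\mathfrak{M}} = \mathfrak{P}(\delta)$; since this $\mathfrak{M}$ is recursive with order restricting to $\prec$ on $\mathbf{V}_\varphi$, Lem.~\ref{lem:sumform:small:order} gives $\semantics{\mathbb{P}(\epsilon)}_{\mathfrak{M}} = \sum_{\delta \in D,\ \delta \models \epsilon} p_\delta$, so $\mathfrak{M} \models \varphi$. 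The converse is precisely Lem.~\ref{lem:smallmodel} and Lem.~\ref{lem:sumform:small:order}, taking $\prec$ as the restriction of the model's recursive order and $D$ as its small support. For the space bound, every component is polynomial: the encodings of $\prec$ and $D$, the $\mathsf{P}$-time checks of Lem.~\ref{lem:detlogic:p}, the construction of $\Phi_{\prec, D}$ (which has at most $|\varphi|$ variables and, as a syntactic arithmetic expression, size polynomial in $|\varphi|$), and the $\mathsf{PSPACE}$ subroutine \cite{Canny}. The main obstacle I anticipate is bookkeeping rather than conceptual: verifying that the substitution and the assembly of $\Phi_{\prec, D}$ genuinely remain polynomial in $|\varphi|$ — so that the resulting real-closed-field query is a legitimate $\mathsf{PSPACE}$ instance — and confirming that the three levels are handled by the single uniform construction, with the level $i$ entering only through which base formulas populate $E$.
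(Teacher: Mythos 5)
Your proposal is correct and follows essentially the same route as the paper: iterate over orders $\prec$ and small candidate supports $\Delta'\subseteq\Delta_\varphi$ licensed by Lem.~\ref{lem:smallmodel}, verify membership in $\Delta_\prec$ via Lem.~\ref{lem:detlogic:p}, reduce to an existential real-closed-field query decided by \cite{Canny}, and recover a model through Lem.~\ref{lem:metamodel:2} (with Lem.~\ref{lem:sumform:small:order} supplying the semantic match, a step the paper leaves implicit). No gaps.
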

\begin{proof}
Algorithm: for each order $\prec$ and subset $\Delta' \subset \Delta_\varphi$ of size $\left|\Delta'\right| \le |\varphi|$ such that $\Delta' \subset \Delta_\prec$,
form a formula $\varphi({\Delta'})$
in the existential theory of the reals ($\exists\mathbb{R}$), over variables $\{\mathbb{P}(\delta')\}_{\delta' \in \Delta'}$ as follows.
Conjoin the equations $\sum_{\delta'} \mathbb{P}(\delta') = 1$ and $\bigwedge_{\delta'}\mathbb{P}(\delta') \ge 0$ to the result of replacing any $\mathbb{P}(\epsilon)$ appearing in $\varphi$ with $\sum_{\delta' \models \epsilon} \mathbb{P}(\delta')$.
Then check satisfiability of $\varphi({\Delta'})$ via a $\mathsf{PSPACE}$ decision procedure for $\exists\mathbb{R}$ \cite{Canny}. Declare $\varphi$ sat. iff any $\varphi({\Delta'})$ is sat.
Completeness: if $\varphi$ is satisfiable, we have a model with a small $\Delta'$ by Lem. \ref{lem:smallmodel}, and the corresponding $\{\mathbb{P}(\delta')\}_{\delta'}$ witness $\varphi({\Delta'})$. Soundness: if $\varphi({\Delta'})$ is satisfiable in $\exists\mathbb{R}$ for some $\Delta'$, then apply Lem.~\ref{lem:metamodel:2} to get a model satisfying $\varphi$.
$\mathsf{PSPACE}$: note that storing an element of $\Delta_\varphi$ takes $O(|\varphi|^4)$ space and
checking that $\Delta' \subset \Delta_\prec$ is $\mathsf{P}$ by Lem.~\ref{lem:detlogic:p}.
%
\end{proof}

\section{Probabilistic Programs}
Thm. \ref{thm:soundcomplete} establishes soundness and completeness for both the class of measurable SCMs, and also for the more restricted class of computable SCMs. The latter result is especially useful for establishing a link to an alternative perspective on causal modeling, emphasizing a procedural rather than declarative aspect \cite{AC17}. We take a probabilistic program to be any generative algorithm:
\begin{definition}[Probabilistic simulation model]
  A \emph{probabilistic simulation} is a probabilistic Turing machine with a read-only random bit tape, a work tape, and a write-only variable tape encoding endogenous variables $\mathbf{V}$.
\end{definition}
A probabilistic simulation outputs values for endogenous variables, eventually establishing a complete endogenous instantiation $\mathbf{v}$ on the variable tape. Thus, like a SCM, a probabilistic simulation model $\aT$ gives a probability distribution $P_{\aT}(\mathbf{V})$.
The following definition of intervention endows these models with a genuine causal interpretation:
\begin{definition} Given a computable intervention $i$ (as in Def. \ref{defn:scmintervention}) and a corresponding oracle for $i$, the simulation $i(\aT)$ emulates $\mathsf{T}$ but acts as if the square for any $X \in i$ is fixed to the value $i(X)$; it dovetails this emulation with a procedure that writes $i(X)$ to $X$ for all $X \in \text{dom}(i)$.
\end{definition}

In \cite{II2019} a subclass $\mathcal{T}^*$ of simulation programs is studied, namely those (1) that satisfy a strong ``functionality'' property for interventions, (2) that produce a solution under every intervention, and (3) for which the direct causal influence relation is well-founded (analogously to the requirement above for SCMs). We can then obtain:
\begin{theorem}
\label{thm:equivalence}
For every $\mathfrak{M} \in \mathcal{M}^*$ there is a $\mathsf{T} \in \mathcal{T}^*$ such that, for every computable intervention $i$, we have $P_{i(\mathfrak{M})}(\mathbf{V}) = P_{i(\aT)}(\mathbf{V})$, and vice versa.
\end{theorem}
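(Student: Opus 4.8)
The plan is to exhibit, in each direction, a distribution-preserving translation between the structural functions $\mathcal{F}$ of a model in $\mathcal{M}^*$ and the code of a simulation in $\mathcal{T}^*$, and then to check that each translation commutes with every computable intervention $i$. Both translations identify the i.i.d.\ uniform binary exogenous variables $U_1, U_2, \dots$ of Def.~\ref{compmodel} with the cells of the read-only random bit tape. With that identification fixed, the theorem reduces to showing that the \emph{deterministic} behavior of $\mathcal{F}$ on a fixed exogenous stream $\mathbf{u}$ agrees with the deterministic run of the program on the corresponding random tape, under each intervention; the probabilistic claim $P_{i(\mathfrak{M})}(\mathbf{V}) = P_{i(\aT)}(\mathbf{V})$ then follows immediately, since $P$ is the common fair-coin measure on both sides. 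This separates the probabilistic bookkeeping from the core computational correspondence, which is essentially the deterministic equivalence of \cite{II2019}.

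For the direction $\mathfrak{M} \to \mathsf{T}$: given $\mathfrak{M} = (\mathcal{F}, P) \in \mathcal{M}^*$ with recursive well-order $\prec$, I would build $\mathsf{T}$ to compute endogenous values by \emph{demand-driven recursion} along $\prec$. To write the value of $V$, the machine runs the uniform algorithm for $f_V$; whenever that algorithm queries a predecessor's value it recursively computes $f_{V'}$, and whenever it queries an exogenous bit it reads the corresponding random-tape cell. As in the proof of Fact~\ref{prop:measurable}, each such subcomputation halts after inspecting only finitely many exogenous bits and (since it halts) issuing only finitely many predecessor queries; well-foundedness of $\prec$ then bounds the recursion tree (no infinite descending $\prec$-chain, finite branching, so by K\"onig's lemma it is finite), guaranteeing termination. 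Dovetailing over all $V$ and writing each result to the variable tape yields $\mathsf{T}$ with $P_{\mathsf{T}}(\mathbf{V}) = P_{\mathfrak{M}}(\mathbf{V})$. A computable intervention $i$ merely short-circuits the recursion, returning $i(V)$ in place of running $f_V$ for $V \in \mathrm{dom}(i)$, exactly the effect of replacing $f_V$ by the constant $i(V)$ in $i(\mathfrak{M})$; hence $P_{i(\mathsf{T})} = P_{i(\mathfrak{M})}$ for every such $i$. That $\mathsf{T} \in \mathcal{T}^*$ follows because functionality and existence of solutions under interventions are inherited from recursiveness of $\mathfrak{M}$, while $\influences_{\mathsf{T}} = \influences_{\mathfrak{M}}$ is well-founded by Prop.~\ref{prop:semimarkovisrecursive}.

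For the converse $\mathsf{T} \to \mathfrak{M}$: I would keep the same exogenous space and use the \emph{functionality} property defining $\mathcal{T}^*$ to extract structural equations, setting $f_V(\mathbf{v}, \mathbf{u})$ equal to the value written to $V$ by the run of $i_{\mathbf{v}_{-V}}(\mathsf{T})$ on random tape $\mathbf{u}$, where $i_{\mathbf{v}_{-V}}$ fixes every endogenous variable other than $V$ to its value in $\mathbf{v}$. Taking $\mathbf{v}$ and $\mathbf{u}$ as oracle inputs, this collection is uniformly computable because $\mathsf{T}$ and its interventions are, giving clause (2) of Def.~\ref{compmodel}, while clause (1) holds by the shared coin-flip space. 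Property (3)---well-founded direct influence of $\mathsf{T}$---supplies an ordinal ranking serving as the well-order witnessing recursiveness, so $\mathfrak{M} \in \mathcal{M}^*$; and by the functionality property the free run of $\mathsf{T}$ outputs precisely the unique solution of $\mathcal{F}(\mathbf{u})$, commuting with interventions by the same short-circuiting argument as before.

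The hard part will be the converse direction: reconciling the \emph{non-negligible}, measure-sensitive definition of $\influences$ with the \emph{pointwise} quantifier in the recursiveness condition (Def.~\ref{def:recursivescm}), and verifying that the equations extracted from $\mathsf{T}$ are genuinely well-founded and admit unique solutions on \emph{every} exogenous stream rather than merely almost everywhere. This is exactly where the three defining properties of $\mathcal{T}^*$ must be used in full: functionality to make the extraction single-valued, existence-of-solutions to guarantee that $\mathcal{F}(\mathbf{u})$ is solvable for all $\mathbf{u}$, and well-foundedness to license the ordinal ranking that serves as $\prec$. Matching the two intervention semantics exactly---checking that the program's operation of fixing a square to $i(X)$ and the SCM's replacement of $f_X$ by the constant $i(X)$ induce identical deterministic runs on each tape---is routine once these properties are in hand, but it is the step that must be verified uniformly across all computable $i$, and it is where the equivalence ultimately rests.
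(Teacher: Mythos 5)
Your proposal takes essentially the same route as the paper: the paper's entire proof is to observe that once the random bit string is fixed, the setting reduces to the deterministic case, and then to invoke the construction of \cite[Thm.~1]{II2019} verbatim---which is precisely your central reduction. You simply unfold the details of that cited construction (demand-driven recursion along $\prec$ in one direction, extraction of structural equations via the functionality property in the other), so the approaches coincide.
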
 
\begin{proof} The only difference between the present setting and the one in \cite{II2019} is the presence of an infinite sequence of random bits (recall Def. \ref{compmodel} for SCMs).
Thus, the same construction as in \cite[Thm. 1]{II2019} gives a model for which, when a random bit string is fixed, we have equivalence under any interventions of the endogenous variables.
\end{proof}
With the obvious interpretation of $\Ll_i$ formulas over simulation models in $\mathcal{T}^*$, we thus have:
\begin{corollary} Each $\AX_i$ is sound and complete for the validities of $\Ll_i$ with respect to  $\mathcal{T}^*$. \qed
\end{corollary}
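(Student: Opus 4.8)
The plan is to derive the corollary by transferring Thm~\ref{thm:soundcomplete} across the equivalence of Thm~\ref{thm:equivalence}. The target is to show that $\mathcal{T}^*$ and $\mathcal{M}^*$ validate exactly the same $\Ll_i$ formulas; soundness and completeness over $\mathcal{T}^*$ then follow immediately from the already-established soundness and completeness over $\mathcal{M}^*$. Concretely, I would first make precise the ``obvious interpretation'' of $\Ll_i$ over $\aT \in \mathcal{T}^*$: exactly as for SCMs, set $\semantics{\mathbb{P}(\epsilon)}_{\aT} = P_{\aT}\big(S_{\aT}(\epsilon)\big)$, where $S_{\aT}(\epsilon)$ is the set of random bit tapes on which the intervened runs of $\aT$ make the base formula $\epsilon \in \Lfull$ true.

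The central claim to verify is that the correspondence of Thm~\ref{thm:equivalence} preserves these quantities: for paired $\mathfrak{M}$ and $\aT$ and every base formula $\epsilon \in \Lfull$, $\semantics{\mathbb{P}(\epsilon)}_{\mathfrak{M}} = \semantics{\mathbb{P}(\epsilon)}_{\aT}$. Granting this, a routine induction on the structure of $\Ll_i$ terms and formulas yields $\mathfrak{M} \models \varphi$ iff $\aT \models \varphi$ for all $\varphi \in \Ll_i$. To establish the preservation claim, the essential observation is that a base formula $\epsilon$ is a Boolean combination of conditionals $[\alpha_1]\beta_1, \dots, [\alpha_n]\beta_n$, so whether $\epsilon$ holds on a given $\mathbf{u}$ depends on the \emph{joint} behaviour of the endogenous variables across the several intervened worlds $i_{\alpha_1}, \dots, i_{\alpha_n}$ simultaneously---not merely on the marginal distribution under any single intervention. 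I would therefore invoke the stronger form of the equivalence noted in the proof of Thm~\ref{thm:equivalence}: the construction matches $\mathfrak{M}$ and $\aT$ \emph{on each fixed random bit string}, so that for every $\mathbf{u}$ the two models agree on the endogenous instantiation produced under every computable intervention at once. This coupling identifies $S_{\mathfrak{M}}(\epsilon)$ with $S_{\aT}(\epsilon)$ as events and gives equality of their probabilities, hence the preservation claim.

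The main obstacle is precisely this last point. The bare statement of Thm~\ref{thm:equivalence} asserts only equality of single-intervention marginals $P_{i(\mathfrak{M})}(\mathbf{V}) = P_{i(\aT)}(\mathbf{V})$, which suffices to transfer $\Ll_1$ and $\Ll_2$ satisfaction but \emph{not} the counterfactual, cross-world conjunctions expressible in $\Ll_3$, since the latter probe the full parallel-worlds distribution. The work lies in extracting from the construction the sample-path level correspondence and confirming that it survives every intervention, so that the joint distribution over all intervened worlds---and not just its one-intervention marginals---is preserved.

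With the preservation claim in hand, the corollary closes by the two directions of Thm~\ref{thm:equivalence}. Soundness transfers via the ``vice versa'' direction: each $\aT \in \mathcal{T}^*$ has an $\Ll_i$-equivalent $\mathfrak{M} \in \mathcal{M}^*$, so any $\varphi$ with $\vdash_i \varphi$, being valid over $\mathcal{M}^*$, holds in every $\aT$. Completeness transfers via the forward direction: if $\not\vdash_i \varphi$, then by completeness over $\mathcal{M}^*$ some $\mathfrak{M} \in \mathcal{M}^*$ refutes $\varphi$, and its $\Ll_i$-equivalent $\aT \in \mathcal{T}^*$ refutes $\varphi$ as well. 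Thus the $\Ll_i$-validities over $\mathcal{T}^*$ coincide with those over $\mathcal{M}^*$, and $\AX_i$ axiomatizes them.
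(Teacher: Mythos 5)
Your proposal is correct and matches the paper's intended argument: the corollary is left as an immediate transfer of Thm.~\ref{thm:soundcomplete} across the equivalence of Thm.~\ref{thm:equivalence}. You rightly observe that the marginal equality $P_{i(\mathfrak{M})}(\mathbf{V}) = P_{i(\aT)}(\mathbf{V})$ alone would not handle $\Ll_3$'s cross-world conjunctions, and that the needed sample-path coupling is exactly what the proof of Thm.~\ref{thm:equivalence} supplies (``when a random bit string is fixed, we have equivalence under any interventions''), so your argument goes through as the paper intends.
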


\section{Conclusion}
We have introduced a series of increasingly expressive languages encoding levels of the ``ladder of causation'' \cite{Shpitser,Pearl2009}, interpreted over both standard structural causal models and probabilistic simulation programs. We moreover established some fundamental theoretical results about these languages, including finitary axiomatizations and $\mathsf{PSPACE}$ complexity upper bounds for satisfiability and validity. This marks the first systematic study of a probabilistic logic of causal counterfactuals.

Along the way we also noted how logical languages might help to illuminate aspects of probabilistic causal reasoning.
As a final illustration, let us return again to the $do$-calculus. We noted two seminal formula schemas from $\Ll_2$ that feature centrally in the $do$-calculus:
\begin{equation}\label{do1}\mathbb{P}\big([\mathbf{X} \wedge \mathbf{Z}]\mathbf{Y}|[\mathbf{X} \wedge \mathbf{Z}]\mathbf{W}\big)  \equiv   \mathbb{P}\big([\mathbf{X}]\mathbf{Y}|[\mathbf{X}](\mathbf{Z} \wedge \mathbf{W})\big)
\end{equation}
\begin{equation}\label{do2}\mathbb{P}\big([\mathbf{X} \wedge \mathbf{Z}]\mathbf{Y}|[\mathbf{X}\wedge \mathbf{Z}]\mathbf{W}\big)  \equiv  \mathbb{P}\big([\mathbf{X}]\mathbf{Y}|[\mathbf{X}]\mathbf{W}\big)
\end{equation}
Where $\Gamma$ is a set of $\Ll_3$ formulas, let $\Gamma_{do}$ be all instances of (\ref{do1}) and (\ref{do2}) that can be inferred from $\Gamma$ using the rules and axioms of $\AX_3$ (equiv. are entailed by $\Gamma$).
As usual, we  say $\Gamma \models \varphi$ to mean that $\Gamma$ semantically entails $\varphi$, while
$\Gamma \vdash \varphi$ means there is a proof in $\AX_3$ of $\varphi$ from assumptions in $\Gamma$. The completeness results of \cite{Huang,Shpitser}, together with Thm. \ref{thm:soundcomplete}, establish the following combined completeness result: \begin{corollary} \label{interestingcorollary} Let $\mathbf{p}$ be a term of $\Ll_1$ and $\epsilon \in \Ll_{2}^{\textnormal{base}}=\Ll_{\textnormal{cond}}$. Then $\Gamma \models \mathbb{P}(\epsilon) \equiv \mathbf{p}$ implies $\Gamma_{do} \vdash \mathbb{P}(\epsilon) \equiv \mathbf{p}$. \qed \end{corollary}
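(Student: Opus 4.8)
The plan is to read the corollary as the completeness of \emph{do}-calculus rendered inside our logic, and to establish it by importing the graph-level completeness theorems of \cite{Huang,Shpitser} and then re-enacting each step of their derivation as an $\AX_3$-proof from $\Gamma_{do}$. Since $\epsilon = [\alpha]\beta \in \Lcond$ and $\mathbf{p}$ is an $\Ll_1$ term, only the finitely many variables occurring in $\epsilon$ and $\mathbf{p}$ matter; $\mathbb{P}(\epsilon)$ is a causal effect and $\mathbf{p}$ a purely observational functional. One direction is immediate: every member of $\Gamma_{do}$ is by definition entailed by $\Gamma$, so any $\Gamma_{do}$-proof of $\mathbb{P}(\epsilon)\equiv\mathbf{p}$ is automatically $\Gamma$-valid. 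The substance is the stated direction, namely that once the effect is pinned to $\mathbf{p}$ in every model of $\Gamma$, this identification already follows from the \emph{do}-calculus consequences $\Gamma_{do}$ by a \emph{finite} proof --- precisely the style of reasoning displayed in the sample derivation of \eqref{do}.

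First I would observe that $\Gamma \models \mathbb{P}(\epsilon) \equiv \mathbf{p}$ is an identifiability assertion, and invoke \cite{Huang,Shpitser}: identifiability yields a finite sequence of \emph{do}-calculus rule applications, interleaved with ordinary probability manipulations, rewriting $\mathbb{P}(\epsilon)$ into $\mathbf{p}$. I would then internalize this sequence in $\AX_3$. The probability manipulations --- marginalization, the product rule, Bayes' rule, summation over mutually exclusive events --- are validities and hence $\AX_3$-theorems by Thm.~\ref{thm:soundcomplete}, discharged concretely through \textsf{Add}, \textsf{Dist}, and \textsf{Poly} exactly as in the derivation of \eqref{do}. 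Each application of the second or third rule is, by Prop.~\ref{docalc}(2),(3), an instance of the schema \eqref{do1} or \eqref{do2}, hence a member of $\Gamma_{do}$ available as an assumption. It remains to account for the first rule (Prop.~\ref{docalc}(1)), which does not appear among \eqref{do1},\eqref{do2}; here I would use the point that a first-rule step is recoverable by chaining one \eqref{do1} and one \eqref{do2} instance through transitivity of $\equiv$ (via \textsf{OrdTrans} and \textsf{Poly}), since its d-separation condition holds in $\mathcal{G}_{\overline{\mathbf{X}}}$ while d-separation is monotone under edge deletion, so that the very same condition forces the sparser conditions in $\mathcal{G}_{\overline{\mathbf{X}}\underline{\mathbf{Z}}}$ and $\mathcal{G}_{\overline{\mathbf{X}},\overline{\mathbf{Z}(\mathbf{W})}}$ required by the second and third rules. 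This explains why confining $\Gamma_{do}$ to \eqref{do1},\eqref{do2} loses nothing. Assembling the internalized steps yields a finite $\AX_3$-proof of $\mathbb{P}(\epsilon)\equiv\mathbf{p}$ from $\Gamma_{do}$.

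I expect the main obstacle to be the bridge between the \emph{abstract} hypothesis $\Gamma \models \mathbb{P}(\epsilon)\equiv\mathbf{p}$ and the \emph{graph-relative} completeness of \cite{Huang,Shpitser}, whose statements presuppose a fixed causal diagram that $\Gamma$ need not determine. The work is to show that the particular rule instances consumed by the imported derivation are genuine semantic consequences of $\Gamma$ --- so that they indeed lie in $\Gamma_{do}$ --- rather than artifacts of one chosen graph. I would argue this by noting that identifiability turns only on \emph{which} \emph{do}-calculus (equivalently, d-separation) instances hold, that each such instance is expressible within $\Ll_2 \subseteq \Ll_3$ as in Prop.~\ref{dep} and Prop.~\ref{docalc}, and that its holding across all models of $\Gamma$ is exactly the condition for membership in $\Gamma_{do}$; uniform validity of the identification over the whole model class of $\Gamma$ then forces the certifying instances to be $\Gamma$-entailed. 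The non-compactness of the languages causes no trouble here, because the imported derivation is already finite, so no infinitary appeal to $\Gamma$ is ever required.
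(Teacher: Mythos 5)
Your proposal follows exactly the route the paper intends: the paper offers no proof beyond the remark that the completeness results of Huang--Valtorta and Shpitser--Pearl combine with Thm.~\ref{thm:soundcomplete}, i.e., take the finite \emph{do}-calculus derivation witnessing identifiability and internalize its probabilistic steps via $\AX_3$ and its rule applications via members of $\Gamma_{do}$. You in fact supply two details the paper leaves tacit --- the reduction of the first \emph{do}-calculus rule to instances of \eqref{do1} and \eqref{do2} via monotonicity of d-separation under edge deletion, and the bridge from the graph-relative completeness theorems to the abstract hypothesis $\Gamma \models \mathbb{P}(\epsilon) \equiv \mathbf{p}$ --- so your sketch is, if anything, more explicit than the paper's.
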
 In other words, to know whether a causal effect $\mathbb{P}(\epsilon)$ can be reduced to a pure probabilistic expression $\mathbf{p}$, given some set of assumptions $\Gamma$, it suffices to take as premises only instances of (\ref{do1}) and (\ref{do2}) (which by Prop. \ref{docalc} can all be inferred from specific graphical properties), and apply the calculus $\AX_3$ (or simply $\AX_2$). The derivation of (\ref{do}) from (\ref{second})--(\ref{fourth}) above is a concrete illustration of this Corollary.

More generally, we submit that the formalization of causal languages offered in this paper helps to clarify what exactly the levels of the causal hierarchy come to, and how we might gain a better understanding of how they relate to each other and to tasks that an intelligent agent might need to solve. For a start on such exploration, see \cite{BCII}.

\subsection{Future Work}
Our work opens up a number of possibilities for further investigation. We mention several here.

Although the complexity of decision problems for the languages considered here is relatively low compared to many expressive logical systems, it may be desirable to consider yet smaller fragments of probability logic. For example, the language of probability statements with linear inequalities remains in $\mathsf{NP}$ \cite{Fagin}. While this fragment is too impoverished to express general assertions about conditional probability, one could extend the language only minimally; cf. \cite{Ivanovska}.

It is also natural to consider more expressive languages. For instance, \cite{II2019} included a (deterministic) causal influence relation $\influences$ explicitly in the logical language, showing, e.g., that transitivity of this relation characterizes exactly the ``local'' SCMs \cite{Pearl2009}. Would this characterization extend to the probabilistic interpretation of causal influence? Indeed, from the perspective of causal learning and reasoning it would be natural to include explicit statements about the underlying graph in the logical syntax \cite{Geiger,Hyttinen1,Hyttinen,Trian}.

On the other hand, leaving graph properties merely implicit as we have done here raises numerous theoretical questions about graph definability, as briefly explored above. Analogous to the case of modal logic, we can ask for the class of graphical properties that can be defined by $\Ll_1$, $\Ll_2$, or $\Ll_3$. Considering different types of causal graphs may lead to variations of this question, e.g., with \emph{mixed ancestral graphs} \cite{Spirtes} which we know reveal a different version of Prop. \ref{docalc} \cite{Zhang08}.

Finally, while Props. \ref{l1l2} and \ref{l2l3} report known results establishing basic strictness of the hierarchy, it would of course be desirable to develop a much more comprehensive and systematic theory of expressiveness for the three languages, again akin to what we have for many other logical languages. What kinds of invariance properties do these languages imply? We leave these open questions for future work.

\section{Acknowledgments}
This material is based upon work supported by the National Science Foundation Graduate Research Fellowship Program under Grant No. DGE-1656518, and by the Center for the Study of Language and Information.

\bibliography{aaai20}
\bibliographystyle{aaai}

\end{document}